\documentclass[11pt,oneside]{article}

\usepackage[margin=1in]{geometry}
\usepackage{graphicx} 
\usepackage{subcaption}
\usepackage{xspace}
\usepackage{amssymb}
\usepackage{amsmath}
\usepackage{amsthm}
\usepackage{graphicx}
\usepackage{multicol}
\usepackage{diagbox}
\usepackage[colorlinks=true, allcolors=blue]{hyperref}
\usepackage{enumerate}    
\usepackage{algorithm}
\usepackage{algorithmic}
\usepackage{enumitem}
\usepackage{tikz}
\usetikzlibrary{positioning,automata}
\usetikzlibrary{decorations.pathreplacing,fit}
\usepackage{pifont}
\usepackage{booktabs}
\usepackage[switch]{lineno}

\newcommand{\problem}[1]{\textsc{#1}}
\newtheorem{theorem}{Theorem}[section]
\newtheorem{lemma}[theorem]{Lemma}
\newtheorem{corollary}[theorem]{Corollary}
\newtheorem{claim}[theorem]{Claim}

\newtheorem{definition}[theorem]{Definition}
\newtheorem{observation}[theorem]{Observation}
\newtheorem{example}[theorem]{Example}

\title{\textbf{The Complexity of Minimizing Subsidies in Envy-Free House Allocation}}
\author{\vspace{1em}
Sijia Dai$^{1,2}$ \quad
Minming Li$^{1}$ \quad
Xiaowei Wu$^{3}$ \quad
Yong Zhang$^{2}$
}
\date{} 

\begin{document}

\maketitle
\footnotetext[1]{City University of Hong Kong. Emails: 
\texttt{sijiadai2-c@my.cityu.edu.hk} and \texttt{minming.li@cityu.edu.hk}.}
\footnotetext[2]{Shenzhen Institutes of Advanced Technology, Chinese Academy of Sciences. Email:  \texttt{zhangyong@siat.ac.cn}.}
\footnotetext[3]{University of Macau. Email: \texttt{xiaoweiwu@um.edu.mo}.}
\begingroup
\renewcommand{\thefootnote}{}
\endgroup
\begin{abstract}

The house allocation problem is a classical one-sided matching problem that concerns the assignment of a set of $m$ houses to $n$ agents according to their preferences, where each agent is assigned exactly one house. Among the various objectives studied in this setting, envy-freeness is one of the most widely adopted fairness criteria. As envy-free house allocations do not always exist, we address this challenge by introducing subsidies and aim to compute allocations that achieve envy-freeness with minimum total subsidy. For binary instances, we show that a total subsidy of at most $(n-1)$ suffices to guarantee envy-freeness in house allocation, and this bound is tight. 
Building on the known NP-hardness for general utilities, we further show that computing an allocation that minimizes the total subsidy is NP-hard, even under binary utilities. However, when there are only a bounded number of types of agents with binary utilities, the problem can be solved in polynomial time. Finally, we present a polynomial time algorithm that computes the minimum subsidy required to achieve envy-freeness for two types of agents with general utilities.

\end{abstract}

\section{Introduction}

House allocation is a fundamental problem in the study of resource allocation. In this setting, a collection of indivisible items, such as houses, offices, or tasks, must be assigned to a group of agents with potentially different preferences, where each agent is allocated exactly one item~\cite{zhou1990,abdulkadirouglu1998}.
As a fundamental class of one-sided matching problems, house allocation has become increasingly critical due to the widespread adoption of automated decision-making systems. 
In many practical applications, houses are assigned through centralized procedures rather than market transactions. Examples include student dormitory assignments, employee housing provided by institutions, and public housing programs. Since such resources are often allocated according to administrative rules instead of prices, the quality of an allocation depends largely on the underlying mechanism. This has motivated extensive research on allocation rules that provide fairness and efficiency guarantees.

Envy-freeness (EF) is a central fairness notion in house allocation~\cite{beynier2019local,Haris2024}. An allocation satisfies envy-freeness if no agent prefers the house assigned to another agent over her own. However, an envy-free allocation may not exist for every instance. 
The question of whether an envy-free allocation exists was completely resolved by Gan et al.~\cite{gan2019envy}, who presented a polynomial-time algorithm for determining whether an envy-free allocation exists and for computing one whenever it does.
Since such allocations are not guaranteed to exist, minimizing unavoidable envy becomes a critical objective.
Kamiyama et al.~\cite{kamiyama2021complexity} demonstrated that minimizing the number of envious agents is NP-hard, even when agents have binary valuations. 
To address this intractability, Hosseini et al.~\cite{hosseini2025fair} proposed a fixed-parameter tractable framework for refining allocations via bounded reallocations, as well as polynomial-time algorithms for single-peaked preference. 
Complementing this, Madathil et al.~\cite{madathil2025cost} proposed three specific metrics for quantifying envy, confirming that minimizing these objectives generally remains computationally hard.

To circumvent the potential non-existence of envy-free allocations, recent works have turned to monetary subsidies, which have been extensively studied in fair division~\cite{SAGT2019,conf/sigecom/BrustleDNSV20,kawase2025towards}. 
While Gan et al.~\cite{gan2019envy} showed that the existence of an envy-free allocation with zero subsidy can be decided in polynomial time, Halpern and Shah~\cite{SAGT2019} established that envy-freeness can always be achieved if subsidies are allowed. 
However, Choo et al.~\cite{CHOO2024} demonstrated that finding the minimum subsidy required is NP-hard in general cases. These findings naturally raise the following question:
\begin{quote}
\emph{Can we efficiently determine the minimum subsidy required to achieve envy-freeness when this subsidy is non-zero for some restricted instances?}
\end{quote}

Motivated by this, we investigate the computational complexity of minimizing total subsidy in house allocation under several well-studied restricted settings, including binary instances, instances with a bounded number of agent types, and the specific case of two agent types.

\subsection{Our Contributions}

Our work considers the problem of minimizing total subsidies under the fairness notion of envy-freeness.
We present the results regarding the computational complexity of envy-free allocations with minimum subsidies.

\noindent
\paragraph{NP-hardness.} For binary instances, we represent house allocation instances as bipartite graphs and present the corresponding structural properties.
We refer to our problem of interest as \problem{Minimum Subsidy for Binary}.
We observe that Pareto optimality is compatible with minimizing the total subsidy in this setting (Lemma~\ref{subsidy-is-0}).
However, we show that finding a Pareto optimal allocation that minimizes the total subsidy is NP-hard even under binary utilities via a reduction from \textsc{Minimum $k$-Union} problem (Theorem~\ref{theorem:reduction}).
Furthermore, the reduction can be modified to be approximation-preserving, which directly yields an $\Omega(n^{1/4})$-approximation hardness for \problem{Minimum Subsidy for Binary} (see Appendix~\ref{Appendix:approximation-preserving}).

\paragraph{Fixed Parameter Tractability.} Given a bounded number $k$ of types of agents with binary utilities, we design a fixed-parameter tractable (FPT)\footnote{An FPT algorithm with respect to a parameter $k$ runs in time $f(k)(n + m)^{O(1)}$ for a computable function $f$.} algorithm for computing an allocation that minimizes the total subsidy (Theorem~\ref{theorem:k-types}).
The key structural insight is that an optimal allocation can be characterized in terms of agent types. In particular, agents of the same type cannot play different roles in the underlying subsidy structure.
This reduces the search to a bounded number of configurations over agent types, each of which can be handled via capacitated matching.
Therefore, the problem is fixed-parameter tractable when parameterized by the number of agent types.

\paragraph{Polynomial Time Algorithm.} We also consider the setting of two types of agents with general utilities, which is another widely studied setting in fair allocation~\cite{mahara2023existence,garg2024weighted,ijcai2025p452,ma2025maximizing}.
While minimizing the total subsidy is NP-hard in general~\cite{CHOO2024}, we exploit the structural properties of instances with two types of agents to characterize minimum-subsidy allocations and identify the properties that every such allocation must satisfy.
Based on these characterizations, we present a polynomial-time algorithm that computes an allocation with the minimum total subsidy (Theorem~\ref{theorem:type-min-subsidy-EF}).

\subsection{Other Related Works}

In fair division, deciding the existence of envy-free allocations with zero subsidy is already NP-hard. 
Consequently, the use of subsidies has been extensively explored to achieve fairness~\cite{conf/wine/WuZZ23,elmalem2025whoever,wu2025little,wu2025revisiting,li2025subsidy,garg2025proportional}.
A related domain is the rent division problem, which involves allocating $n$ distinct goods to $n$ agents while splitting the total rent~\cite{svensson1983large,edward1999rental,gal2017fairest}. 
Further research explored broader classes of valuations beyond additive ones, such as general monotone functions~\cite{kawase2025towards} and matroid  functions~\cite{barman2022achieving,goko2024fair}.
Other related works explored weighted fairness notions with subsidies~\cite{conf/wine/WuZZ23,wu2024tree,elmalem2025whoever}.

Recent years have seen renewed interest in the house allocation problem, driven by new challenges in fairness and computation. 
Shende and Purohit~\cite{shende2023strategy} studied the trade-offs between envy-freeness and strategy-proofness, 
while Hosseini et al.~\cite{hosseini2024degree} investigated the trade-off between efficiency and fairness in house allocation problems.
Dai et al.~\cite{dai2024weighted} introduced weighted envy-freeness in house allocation and investigated the use of subsidies to achieve it.
Aziz et al.~\cite{Haris2024} extended house allocation to settings with uncertain preferences.
Other works studied scenarios where agents envy others only along a predefined social graph, aiming to minimize total envy~\cite{hosseini2023graphical,hosseini2024tight}. 
For broader overviews of fair division of indivisible goods, we refer to the surveys~\cite{amanatidis2023fair,liu2024mixed,suksompong2024weighted}.

\section{Preliminaries}
\label{section:Preliminaries}

In this section, we introduce the notation and fairness concepts used throughout the paper.
Let $N=\{1,2,\ldots,n\}$ be the set of $n$ agents and $H=\{h_1,\ldots,h_m\}$ be the set of $m$ houses, where $m\ge n$. 
Each agent $i\in N$ is associated with a non-negative utility function $v_i:H\rightarrow\mathbb{R}_{\ge 0}$. 
For every house $h\in H$, $v_i(h)$ denotes the utility of agent $i$ for house $h$.
We use $\mathbf{v}=(v_1,\ldots,v_n)$ to denote the collection of all utility functions.
An allocation is represented by $\mathbf{A}=(A_1,\ldots,A_n)$, where $A_i\in H$ is the house assigned to agent $i$, and each house is assigned to at most one agent. 
Equivalently, an allocation can be viewed as an $N$-saturating matching\footnote{A matching is called $N$-saturating if every agent in $N$ is matched.} in the complete bipartite graph with bipartition $(N,H)$.
A house allocation instance is specified by the tuple $\mathcal{I}=(N,H,\mathbf{v})$. 
We next introduce the fairness notion that serves as the foundation of our study.

\begin{definition}[EF Allocation]
\label{definition:EF Allocation}
An allocation $\mathbf{A}$ is called envy-free (EF) if, for every pair of agents $i,j\in N$, 
\begin{equation*}
    v_i(A_i) \ge v_i(A_j).
\end{equation*}
\end{definition}

Since envy-free allocations may not exist, we consider the use of subsidies to achieve envy-freeness.
Let $p_i\in\mathbb{R}_{\ge 0}$ denote the subsidy received by agent $i$, and let $\mathbf{P}=(p_1,p_2,\ldots,p_n)$ denote the corresponding \emph{subsidy vector}. 
An \emph{outcome} is a pair $(\mathbf{A},\mathbf{P})$, consisting of an allocation $\mathbf{A}$ together with a corresponding subsidy vector $\mathbf{P}$. 
The utility of agent $i$ under an outcome $(\mathbf{A},\mathbf{P})$ is defined as $v_i(A_i)+p_i$. 
Note that an allocation $\mathbf{A}$ can be viewed as the special outcome $(\mathbf{A},\mathbf{0})$, where $\mathbf{0}$ denotes the all-zero subsidy vector. 
We next define the notion of an envy-free outcome.

\begin{definition}[EF Outcome]
\label{definition:EF Outcome}
An outcome $(\mathbf{A},\mathbf{P})$ is called envy-free if, for every pair of agents $i,j\in N$, 
\begin{equation*}
    v_i(A_i)+p_i \ge v_i(A_j)+p_j.
\end{equation*}
\end{definition}

Given an allocation $\mathbf{A}$, a subsidy vector $\mathbf{P}$ is \textit{envy-eliminating} if the outcome $(\mathbf{A},\mathbf{P})$ is envy-free. We denote by $\mathcal{P}(\mathbf{A})$ the set of all such subsidy vectors.
We say that $(\mathbf{A}, \mathbf{P})$ is a \textit{minimum-subsidy envy-free outcome} if it is an envy-free outcome and for every allocation $\mathbf{A'}$  (possibly $\mathbf{A'}=\mathbf{A}$) and any corresponding envy-eliminating subsidy vector $\mathbf{P'} $ in $\mathcal{P}(\mathbf{A'})$, it holds that $\sum_{i \in N}p_i  \leq \sum_{i \in N}p'_{i}$.

\begin{definition}[EFable]
\label{definition:EFable}
An allocation  $\mathbf{A}$ is called envy-freeable (EFable) if there exists a subsidy vector $\mathbf{P}$ such that the outcome $(\mathbf{A}, \mathbf{P})$ is envy-free, that is, $\mathcal{P}(\mathbf{A}) \neq \emptyset$.
\end{definition}

\begin{definition}[PO]
\label{definition:PO}
An allocation $\mathbf{A}$ is called Pareto optimal (PO) if there is no allocation $\mathbf{A'}$ such that $v_i(A_i') \ge v_i(A_i)$ for every agent  $i \in N$ and $v_j(A_j') > v_j(A_j)$ for some $j \in N$.
\end{definition}

We next introduce the envy graph to characterize EFable allocations. 

\begin{definition}[Envy Graph]
Given an allocation $\mathbf{A}=(A_1,A_2,\ldots,A_n)$, we define a complete directed graph $G_{\mathbf{A}}$ on vertex set $N$, where the weight of the directed edge $(i,j)$ is
\begin{equation*}
    w(i,j) = v_i(A_j) - v_i(A_i). 
\end{equation*}
\end{definition}
The weight $w(i,j)$ represents the amount of envy that agent $i$ has toward agent $j$. 
A path $\mathcal{F}$ in $G_{\mathbf{A}}$ is defined as a sequence of vertices $(i_0,i_1,i_2,\ldots,i_k)$. If $i_0=i_k$, then $\mathcal{F}$ is called a \emph{cycle}. 
By definition, we have $w(i,i)=0$ for every $i\in N$.
Let $w(\mathcal{F})$ denote the weight of a path $\mathcal{F}$, defined as the sum of the weights of its edges:
\begin{equation*}
     w(\mathcal{F})=\sum_{t=0}^{k-1}w(i_t,i_{t+1}). 
\end{equation*}
For any pair of agents $i,j\in N$, let $\ell(i,j)$ denote the maximum weight among all paths from $i$ to $j$. 
We further define
$\ell(i)=\max_{j\in N}\ell(i,j),$
which is the maximum weight of any path starting at $i$.

Building on this characterization, Halpern and Shah~\cite{SAGT2019} established the following theorem, which is fundamental in the study of fair allocation with subsidies.

\begin{theorem}[\textnormal{Halpern and Shah}~\cite{SAGT2019}]
\label{theorem:SAGT2019}
For an allocation $\mathbf{A}$, the following statements are equivalent.
\begin{enumerate}[label=(\arabic*).]
    \item $\mathbf{A}$ is EFable.
    \item $\mathbf{A}$ maximizes the utilitarian welfare across all reassignments of its bundles to
agents, that is, for every permutation $\sigma$ of $[n]$, $\sum_{i \in N}v_i(A_i) \ge \sum_{i \in N}v_i(A_{\sigma(i)})$. We refer to this condition as the permutation maximality.
    \item $G_{\mathbf{A}}$ has no positive-weight cycles.
\end{enumerate}
\end{theorem}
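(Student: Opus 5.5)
We prove the cycle of implications (1)$\Rightarrow$(2)$\Rightarrow$(3)$\Rightarrow$(1), working entirely with the envy graph $G_{\mathbf{A}}$ and its edge weights $w(i,j)=v_i(A_j)-v_i(A_i)$.

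\emph{(1)$\Rightarrow$(2).} Suppose $\mathbf{P}\in\mathcal{P}(\mathbf{A})$ and fix a permutation $\sigma$ of $[n]$. For each agent $i$, envy-freeness of the outcome gives
\begin{equation*}
v_i(A_i)+p_i\ge v_i(A_{\sigma(i)})+p_{\sigma(i)}.
\end{equation*}
Summing over all $i\in N$, the subsidy terms cancel because $\sigma$ is a bijection, so $\sum_i p_{\sigma(i)}=\sum_i p_i$. What remains is exactly $\sum_i v_i(A_i)\ge\sum_i v_i(A_{\sigma(i)})$, which is permutation maximality.

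\emph{(2)$\Rightarrow$(3).} We argue by contrapositive. Suppose $G_{\mathbf{A}}$ has a positive-weight cycle. Any cycle that repeats a vertex splits into two shorter closed walks whose weights add up to the original weight, so one of them has positive weight. Repeating this, we obtain a positive-weight simple cycle $(i_0,i_1,\ldots,i_{k-1},i_0)$. Define $\sigma(i_t)=i_{t+1}$ (indices taken mod $k$) and $\sigma(j)=j$ for every agent off the cycle. Then
\begin{equation*}
\sum_{i\in N} v_i(A_{\sigma(i)})-\sum_{i\in N} v_i(A_i)=w(\text{cycle})>0,
\end{equation*}
which contradicts permutation maximality.

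\emph{(3)$\Rightarrow$(1).} Assume $G_{\mathbf{A}}$ has no positive-weight cycle. Then the maximum weight over walks from $i$ to $j$ equals the maximum weight over simple paths, since any cycle inside a walk contributes at most zero and can be removed without decreasing the weight. Hence $\ell(i,j)$ and $\ell(i)$ are finite and well defined. Set $p_i=\ell(i)$. This is nonnegative because the trivial path from $i$ to itself has weight $0$, so $\ell(i)\ge 0$. To check envy-freeness, fix agents $i,j$. Prepending the edge $(i,j)$ to a maximum-weight path starting at $j$ gives a walk starting at $i$, so
\begin{equation*}
\ell(i)\ge w(i,j)+\ell(j)=v_i(A_j)-v_i(A_i)+\ell(j).
\end{equation*}
Rearranging gives $v_i(A_i)+p_i\ge v_i(A_j)+p_j$, so $(\mathbf{A},\mathbf{P})$ is envy-free and $\mathbf{A}$ is EFable.

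The main point needing care is this last step. The prepended walk may revisit vertices, so the argument only goes through because, with no positive cycles, walk maxima and path maxima coincide; this is what makes the inequality $\ell(i)\ge w(i,j)+\ell(j)$ valid. The other two implications are direct computations.
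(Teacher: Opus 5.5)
Your proof is correct and follows the standard Halpern--Shah argument, which the paper cites without reproducing: you sum the envy-freeness constraints along a permutation, turn a positive simple cycle into a welfare-improving cyclic reassignment, and set $p_i=\ell(i)$ using $\ell(i)\ge w(i,j)+\ell(j)$. Your extra care in reducing to a simple cycle and in equating walk maxima with path maxima is what makes the permutation in (2)$\Rightarrow$(3) well defined and the final inequality rigorous.
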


\begin{theorem}[\textnormal{Halpern and Shah}~\cite{SAGT2019}]
\label{theorem:min-subsidy-ell(i)}

For an EFable allocation $\mathbf{A}$, let $\ell(i)$ be the maximum weight of any path starting at $i$ in $G_\mathbf{A}$.
For each $i\in N$, let $p_i^*=\ell(i)$ and $\mathbf{P}^*=(p_i^*)_{i\in N}$.
Then $\mathbf{P}^*\in\mathcal{P}(\mathbf{A})$.
Moreover, for every $\mathbf{P}\in\mathcal{P}(\mathbf{A})$ and every $i\in N$,
$p_i\geq p_i^*$.

\end{theorem}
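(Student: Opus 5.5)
The plan is to verify the two claims of Theorem~\ref{theorem:min-subsidy-ell(i)} directly from the definition of the envy graph, using Theorem~\ref{theorem:SAGT2019} only to guarantee that $\ell(i)$ is well defined.

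\textbf{Step 1: $\ell(i)$ is well defined and nonnegative.} Since $\mathbf{A}$ is EFable, Theorem~\ref{theorem:SAGT2019} tells us that $G_{\mathbf{A}}$ has no positive-weight cycles. Any walk can therefore be shortcut to a simple path without decreasing its weight, because removing a cycle removes a non-positive amount. So the maximum over all paths is attained among finitely many simple paths, and $\ell(i,j)$ is finite. The trivial path $(i)$ has weight $0$ (equivalently $w(i,i)=0$), so $\ell(i)\ge 0$ and $p_i^*$ is a legitimate subsidy.

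\textbf{Step 2: $\mathbf{P}^*$ is envy-eliminating.} Fix $i,j\in N$. We need $v_i(A_i)+\ell(i)\ge v_i(A_j)+\ell(j)$, i.e.\ $\ell(i)\ge w(i,j)+\ell(j)$. Take a path from $j$ attaining $\ell(j)$ and prepend $i$. This gives a walk from $i$ of weight $w(i,j)+\ell(j)$. By the shortcutting argument of Step~1, some path from $i$ has at least this weight, so $\ell(i)$ is at least this weight, as required.

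\textbf{Step 3: minimality.} Let $\mathbf{P}\in\mathcal{P}(\mathbf{A})$. Envy-freeness gives $p_i\ge w(i,j)+p_j$ for every edge $(i,j)$. Along any path $(i=i_0,i_1,\dots,i_k)$, telescoping yields
\[
p_i \;\ge\; \sum_{t=0}^{k-1} w(i_t,i_{t+1}) + p_{i_k} \;\ge\; w(\mathcal{F}),
\]
using $p_{i_k}\ge 0$. Taking the maximum over all paths from $i$ gives $p_i\ge \ell(i)=p_i^*$.

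\textbf{Main obstacle.} The only delicate point is Step~2. Prepending $i$ to a path may create a repeated vertex if $i$ already lies on the optimal path from $j$. This is handled by the no-positive-cycle property: the repeated part forms a cycle of non-positive weight, so deleting it does not reduce the weight. Hence the bound $\ell(i)\ge w(i,j)+\ell(j)$ still holds whether paths are allowed to repeat vertices or not.
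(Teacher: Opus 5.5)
Your proof is correct. The paper gives no proof of its own here; it imports the result from Halpern and Shah, and your argument is essentially the standard one from that source: you prepend the edge $(i,j)$ to a maximum-weight path from $j$ to get $\ell(i)\ge w(i,j)+\ell(j)$, and for minimality you telescope $p_{i_t}\ge w(i_t,i_{t+1})+p_{i_{t+1}}$ along any path, using $p_{i_k}\ge 0$.

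One small simplification: the paper defines a path as an arbitrary vertex sequence, so the walk you build in Step~2 is already a path. Under that convention, the no-positive-cycle property is needed only to make $\ell$ finite (Step~1), and the shortcutting you flag as the main obstacle is unnecessary.
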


Consider the complete bipartite graph with bipartition $(N,H)$ in which edge $(i,h)$ has weight $v_i(h)$.
Since a house allocation can be regarded as an $N$-saturating matching, any permutation of the houses yields a feasible allocation.
Moreover, a maximum-weight matching maximizes total utilitarian welfare over all feasible allocations.
Therefore, the allocation induced by a maximum-weight matching maximizes utilitarian welfare among all permutations of the bundles, and hence satisfies permutation maximality.

\begin{corollary}
\label{corollary:EFable exist in bi}
An allocation is envy-freeable in house allocation problems if it corresponds to a maximum-weight matching between $N$ and $H$. 
\end{corollary}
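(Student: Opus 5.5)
The plan is to reduce the statement to item (2) of Theorem~\ref{theorem:SAGT2019}. Let $\mathbf{A}$ be induced by a maximum-weight $N$-saturating matching $M$ in the complete bipartite graph on $(N,H)$, where edge $(i,h)$ has weight $v_i(h)$. By Theorem~\ref{theorem:SAGT2019}, it suffices to show that $\mathbf{A}$ is permutation maximal, i.e., that for every permutation $\sigma$ of $[n]$ we have $\sum_{i\in N} v_i(A_i)\ge \sum_{i\in N} v_i(A_{\sigma(i)})$.

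The key step is to observe that any reassignment of bundles is itself a feasible allocation. Fix a permutation $\sigma$ and define $\mathbf{A}^\sigma$ by $A^\sigma_i = A_{\sigma(i)}$. Because $\sigma$ is a bijection on $N$ and the houses $A_1,\dots,A_n$ are pairwise distinct, the houses $A^\sigma_1,\dots,A^\sigma_n$ are also pairwise distinct. Every agent receives exactly one of them, so $\mathbf{A}^\sigma$ corresponds to an $N$-saturating matching $M^\sigma$ in the same bipartite graph. Its weight is $\sum_i v_i(A_{\sigma(i)})$.

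The weight of $M$ is $\sum_i v_i(A_i)$. Maximality of $M$ among all matchings, and in particular among $N$-saturating ones, gives $w(M)\ge w(M^\sigma)$, which is exactly the permutation-maximality inequality. Since $\sigma$ was arbitrary, condition (2) of Theorem~\ref{theorem:SAGT2019} holds, and so $\mathbf{A}$ is EFable.

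The one point needing care, which I expect to be the only real obstacle, is the meaning of "maximum-weight matching." If it is taken over all matchings rather than only $N$-saturating ones, $M$ might not saturate $N$. This cannot happen once we assume utilities are nonnegative and $m\ge n$. An unmatched agent can then be matched to an unused house without decreasing the weight. So some maximum-weight matching is $N$-saturating, and that one induces the allocation $\mathbf{A}$. Alternatively, I would simply interpret the statement as referring to a maximum-weight $N$-saturating matching, which is all the argument above uses.
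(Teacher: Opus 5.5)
Your proposal is correct and follows essentially the same route as the paper: every reassignment of bundles is itself an $N$-saturating matching, so a maximum-weight matching has weight at least as large. This gives permutation maximality and hence envy-freeability via condition (2) of Theorem~\ref{theorem:SAGT2019}. Your extra discussion of whether the maximum ranges over all matchings or only $N$-saturating ones is harmless but not needed, since an allocation is by definition an $N$-saturating matching and the permuted allocations lie in the same class, so the inequality holds under either reading.
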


Note that the ``only if'' direction does not necessarily hold, as house allocation does not impose non-wastefulness and thus not all houses need to be allocated.
Consider an instance with $n$ agents and $n + 1$ houses, where every agent has an identical utility function: each agent values one particular house at $1$ and all other $n$ houses at $0$. 
If the $n$ houses of value $0$ are allocated to the $n$ agents, the resulting allocation is still EFable and the corresponding subsidies are $\mathbf{0}$.

\section{Binary Utility Functions}
\label{section:Binary Utility Functions}

In this section, we consider the binary instances, in which we have $v_i(h) \in \{0,1\}$ for all $i\in N$ and $h\in H$.
Given a binary instance $\mathcal{I}$, it is often more convenient to work with the bipartite graph $G_\mathcal{I}=(N \cup H,E)$, where we put an edge $(i,h)\in E$ between agent $i$ and house $h$ if and only if $v_i(h) = 1$.
We call the  bipartite graph $G_\mathcal{I}$ a representing graph of the instance $\mathcal{I}$.
Every allocation for the instance can be represented by a matching in the representing graph, where the matched agents receive houses of value $1$ and each unmatched agent receives an unmatched house (which has value $0$ to the agent).
Throughout, we call an agent \emph{matched} under allocation $\mathbf{A}$ if $v_i(A_i)=1$.

\subsection{Structural Properties}

\begin{observation}
\label{observation:is-1}
    For binary instances, the weight $w(i,j)$ in the envy graph is in $\{-1,0,1\}$ for any $i, j \in N$.  
Specifically, for any matched agent $i$, $w(i,j) \in \{0,-1\}$ for any $j \in N$;  
for any unmatched agent $i$, $w(i,j) \in \{0,1\}$ for any $j \in N$.
\end{observation}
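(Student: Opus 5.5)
The plan is to argue directly from the definition of the envy graph. By definition, $w(i,j)=v_i(A_j)-v_i(A_i)$. In a binary instance both $v_i(A_j)$ and $v_i(A_i)$ lie in $\{0,1\}$, so their difference lies in $\{-1,0,1\}$. This gives the first claim immediately, with no appeal to earlier results.

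For the refined claim, I split into two cases according to whether agent $i$ is matched. Recall the convention that $i$ is matched under $\mathbf{A}$ exactly when $v_i(A_i)=1$.

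\emph{Matched agents.} If $i$ is matched, then $v_i(A_i)=1$, so $w(i,j)=v_i(A_j)-1$. Since $v_i(A_j)\in\{0,1\}$, this gives $w(i,j)\in\{-1,0\}$. For $j=i$ this is consistent with $w(i,i)=0$.

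\emph{Unmatched agents.} If $i$ is unmatched, then $v_i(A_i)\neq 1$, and binarity forces $v_i(A_i)=0$. Hence $w(i,j)=v_i(A_j)\in\{0,1\}$.

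There is no real obstacle here. The only point needing care is that ``unmatched'' must be read as $v_i(A_i)=0$. This follows from the binary restriction together with the paper's definition of matched, and it agrees with the representing-graph view, in which unmatched agents receive houses they value at $0$.
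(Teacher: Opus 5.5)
Your proof is correct and matches the paper's reasoning: the paper states this observation without proof, as an immediate consequence of $w(i,j)=v_i(A_j)-v_i(A_i)$ with both terms in $\{0,1\}$ and the convention that $i$ is matched exactly when $v_i(A_i)=1$. Your case split on $v_i(A_i)\in\{0,1\}$ spells out that implicit argument.
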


\begin{lemma}
\label{subsidy-is-1}
    For binary instances, each agent requires a subsidy of at most $1$ in any EFable allocation $\mathbf{A}$, i.e., $\ell(i) \leq 1$ for any $i \in N$.

\end{lemma}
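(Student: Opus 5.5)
We show $\ell(i)\le 1$ for every $i$. The plan is to bound the weight of an arbitrary path in $G_{\mathbf{A}}$ using only its first edge and the structure of binary weights, together with the absence of positive cycles (Theorem~\ref{theorem:SAGT2019}). Since $\ell(i)\ge w(i,i)=0$ trivially, only the upper bound is needed.

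First we use telescoping. For a path $\mathcal{F}=(i_0,\dots,i_k)$ we have $w(i_t,i_{t+1})=v_{i_t}(A_{i_{t+1}})-v_{i_t}(A_{i_t})$. Summing,
\[
w(\mathcal{F})=\sum_{t=0}^{k-1} v_{i_t}(A_{i_{t+1}})-\sum_{t=0}^{k-1} v_{i_t}(A_{i_t}).
\]
This bound is crude, so we argue differently.

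Let $\mathcal{F}$ be a maximum-weight path from $i=i_0$ to $j=i_k$. We may assume $\mathcal{F}$ is simple. Indeed, any repeated vertex creates a cycle, which has non-positive weight because $\mathbf{A}$ is EFable. Removing that cycle does not decrease the weight, so $\ell(i,j)$ is attained by a simple path.

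Now consider the closing edge $(j,i)$. The sequence $\mathcal{F}$ followed by $(j,i)$ is a cycle, so Theorem~\ref{theorem:SAGT2019}(3) gives $w(\mathcal{F})+w(j,i)\le 0$, that is, $w(\mathcal{F})\le -w(j,i)$. By Observation~\ref{observation:is-1}, $w(j,i)\ge -1$, hence $w(\mathcal{F})\le 1$. Maximizing over $j$ yields $\ell(i)\le 1$, and Theorem~\ref{theorem:min-subsidy-ell(i)} then says agent $i$ needs subsidy at most $1$. The case $j=i$ is trivial, since then $\mathcal{F}$ is a cycle and has weight at most $0$.

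The only potential obstacle is ensuring that maximum-weight paths are well defined, i.e., finite. This follows because there are no positive cycles, so the maximum over simple paths suffices. Beyond that, the argument is a direct two-line consequence of cycle non-positivity and the lower bound $w\ge -1$.
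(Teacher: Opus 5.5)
Your proof is correct and is essentially the paper's argument: you close a maximum-weight path from $i$ to $j$ with the edge $(j,i)$, invoke non-positivity of cycles from Theorem~\ref{theorem:SAGT2019}, and use $w(j,i)\ge -1$ from Observation~\ref{observation:is-1}; the paper phrases the same step as a contradiction with $\ell(i)>1$. Your reduction to simple paths is a harmless extra bit of care, and you can simply delete the abandoned telescoping paragraph.
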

\begin{proof}
    We prove the lemma by contradiction. 
    Assume that there exists an agent $i$ who requires $p_i > 1$, i.e., $\ell(i) > 1$.  
    Suppose $j=\arg\max_{j \in N} \ell(i,j)$. Since there is also an edge from $j$ to $i$, and this edge has weight at least $-1$ by Observation~\ref{observation:is-1}, the path defining $\ell(i,j)$, together with edge $(j,i)$ form a cycle.
    The total weight of this cycle is $\ell(i) + w(j,i) > 1 + (-1) = 0$, which contradicts the characterization given in Theorem~\ref{theorem:SAGT2019}.
\end{proof}

\begin{corollary}
    A total subsidy of at most $(n-1)$ suffices to guarantee envy-freeness, and this bound is tight.

\end{corollary}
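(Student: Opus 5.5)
Two things need proof: an upper bound of $n-1$ and a matching lower-bound instance.

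\emph{Upper bound.} Take an allocation $\mathbf{A}$ induced by a maximum-weight matching between $N$ and $H$. By Corollary~\ref{corollary:EFable exist in bi}, $\mathbf{A}$ is EFable. Theorem~\ref{theorem:min-subsidy-ell(i)} then gives that $p_i^*=\ell(i)$ is envy-eliminating, and Lemma~\ref{subsidy-is-1} gives $\ell(i)\le 1$ for every $i$. So the total is at most $n$, and we need to save one unit by exhibiting an agent with $\ell(i)=0$.

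Pick an agent $i^*$ with maximum utility under the envy-eliminating outcome, i.e.\ maximizing $v_{i^*}(A_{i^*})+\ell(i^*)$. I claim $\ell(i^*)=0$. Suppose instead $\ell(i^*)=1$.
\begin{itemize}
\item Since $\ell(i^*)>0$, the maximum-weight path from $i^*$ has positive weight, so $i^*$ cannot be matched. Indeed, for a matched agent every outgoing edge has weight $\le 0$ by Observation~\ref{observation:is-1}, and any path starting at a matched agent therefore has nonpositive weight once we use the bound $\ell\le 1$ for the tail.
\item Let me argue this more carefully. If $i^*$ is matched and $\ell(i^*)=1$, there is a path $i^*\to j\to\cdots$ whose first edge has weight $\le 0$. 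Hence $\ell(j)\ge 1$, and $j$ has utility at least $v_j(A_j)+1\ge 1$. This equals the utility $1+1=2$ of $i^*$ only if $j$ is matched as well.
\end{itemize}
This bookkeeping is getting messy, so I will use a cleaner route: the sum-of-$\ell$ argument via a sink. In the minimum subsidy vector, some agent always has $p_i^*=0$. Suppose every $\ell(i)\ge 1$, hence every $\ell(i)=1$. Choose $i$ and a path $i=i_0,\dots,i_k$ achieving $\ell(i)=1$ that is maximal. Then $\ell(i)=w(i_0,i_1)+\ell(i_1)$ along the optimal path, with $\ell(i_1)\ge$ the weight of the path's tail. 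Following optimal successors, the walk in the finite graph eventually repeats a vertex. Because there are no positive cycles, that cycle has weight $0$ and can be removed.

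So instead I use the standard fact that $\ell(i)$ counts simple paths and the endpoint $j$ of an optimal path from $i$ satisfies $\ell(j)=0$. If $\ell(j)>0$, extending the path by an optimal path from $j$ gives a walk of weight $>\ell(i)$. Any cycle in that walk has weight $\le 0$, so shortcutting it yields a path of weight $>\ell(i)$, a contradiction. Hence some agent has $p^*=0$, and the total is at most $n-1$. This shortcutting step is the main point to verify carefully. It needs that deleting a nonpositive cycle from a walk does not decrease its weight, which holds.

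\emph{Tightness.} Take $n$ agents and $n$ houses, where every agent values $h_1$ at $1$ and all other houses at $0$. Any allocation gives $h_1$ to one agent $j$, and every other agent $i$ has $w(i,j)=1$, so $p_i\ge p_j+1\ge 1$. The total subsidy is therefore at least $n-1$, and this is achieved by setting $p_i=1$ for $i\ne j$ and $p_j=0$.
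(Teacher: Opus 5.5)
Your final argument is correct and has the same skeleton as the paper's proof. Each agent needs at most $1$ by Lemma~\ref{subsidy-is-1}, some agent needs $0$, and the tightness instance is identical. The difference lies in how you find the zero-subsidy agent.

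The paper uses a shift argument. The constraints $v_i(A_i)+p_i\ge v_i(A_j)+p_j$ are unchanged by adding a common constant to all subsidies. So if every $p_i>0$, one can subtract $\min_i p_i$ from everyone, remain envy-free and nonnegative, and lower the total. Equivalently, this contradicts the pointwise minimality of $(\ell(i))_{i\in N}$ in Theorem~\ref{theorem:min-subsidy-ell(i)}.

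You instead show that the endpoint $j$ of a maximum-weight path from any agent has $\ell(j)=0$, by concatenating with an optimal path from $j$ and shortcutting nonpositive cycles. That is valid and gives slightly more structure: every optimal path ends at a zero-subsidy agent. However, it relies on the no-positive-cycle property. The shift argument needs nothing about the envy graph and would have spared you the detour. In your favour, you state explicitly which EFable allocation you start from (one induced by a maximum-weight matching), which the paper leaves implicit.

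When you write this up, delete the abandoned first attempt. Its claims are false, not just messy: a matched agent can have $\ell=1$, and the agent of largest utility need not have $\ell=0$. Take agents $a,u,b$ and houses $h_a,h_u,h_b$, where $a$ values $h_a$ and $h_u$ at $1$, and $u$ and $b$ value only $h_b$ at $1$. The allocation giving each agent her namesake house is EFable, since every cycle in its envy graph has weight at most $0$. Here $a$ is matched, yet $\ell(a)\ge w(a,u)+w(u,b)=0+1=1$. Moreover, $a$ has the largest utility, $2$. Comparing utilities across agents with different valuations is not meaningful in any case.
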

\begin{proof}
In any minimum total subsidy vector $\mathbf{P}$, at least one agent must receive zero subsidy; otherwise, all subsidies could be uniformly reduced while preserving envy-freeness, contradicting the minimality of $\mathbf{P}$.
Lemma~\ref{subsidy-is-1} implies that, to achieve envy-freeness under binary valuations in house allocation, the total required subsidy is at most $(n-1)$.
To show that this bound is tight, we consider an instance with identical binary valuations where each agent values a particular house at $1$ and all other houses at $0$, and assume $m = n$. 
In any allocation, this special house is assigned to exactly one agent. 
To achieve envy-freeness, every other agent must receive a subsidy of $1$. 
Thus, a total subsidy of $n - 1$ is required. 
\end{proof}

    The following lemma states that when the minimum subsidy is non-zero, the corresponding allocation forms a maximum cardinality matching in $G_\mathcal{I}$ for the binary instance $\mathcal{I}$.
    Note that this lemma holds only when the amount of minimum total subsidy is nonzero; in the example at the end of Section~\ref{section:Preliminaries}, the allocation requires $\mathbf{0}$ subsidy but admits an augmenting path.
    
\begin{lemma}
\label{subsidy-is-0}
   If $(\mathbf{A^*}, \mathbf{P^*})$ is a minimum-subsidy envy-free outcome and $\sum_{i \in N} p^*_i > 0$, then the matching corresponding to allocation $\mathbf{A^*}$ in $G_\mathcal{I}$ does not admit any augmenting path.
   Equivalently, it is a maximum-cardinality matching and $\mathbf{A^*}$ is Pareto optimal.
\end{lemma}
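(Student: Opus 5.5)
The plan is to argue by contradiction. Suppose $(\mathbf{A^*},\mathbf{P^*})$ is a minimum-subsidy envy-free outcome with $\sum_i p^*_i>0$, and suppose the matching $M$ of $\mathbf{A^*}$ in $G_\mathcal{I}$ admits an augmenting path
$$i_0,h_1,a_1,h_2,a_2,\ldots,a_{k-1},h_k .$$
Here $i_0$ is an unmatched agent, each $a_t$ is matched to $h_t$ in $M$, each non-matching edge $(a_{t-1},h_t)$ lies in $G_\mathcal{I}$, and $h_k$ is not covered by $M$. I want either to exhibit a positive-weight cycle in $G_{\mathbf{A^*}}$, contradicting EFability via Theorem~\ref{theorem:SAGT2019}, or to produce an allocation with strictly smaller minimum subsidy.

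\textbf{Step 1 (where $h_k$ can lie).} Suppose $h_k$ is assigned in $\mathbf{A^*}$ to some agent $j$; necessarily $j$ is unmatched, since $h_k$ is $M$-free. Then $i_0\to a_1\to\cdots\to a_{k-1}\to j\to i_0$ is a cycle in $G_{\mathbf{A^*}}$. Its edge weights are as follows.
\begin{itemize}
\item $w(i_0,a_1)=1$, because $i_0$ is unmatched and likes $h_1$.
\item $w(a_{t-1},a_t)=0$, because $a_{t-1}$ is matched and likes $h_t$.
\item $w(a_{k-1},j)=0$, because $a_{k-1}$ likes $h_k$.
\item $w(j,i_0)\ge 0$, because $j$ is unmatched (Observation~\ref{observation:is-1}).
\end{itemize}
So the cycle has positive weight, contradicting Theorem~\ref{theorem:SAGT2019}. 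Hence $h_k$ must be an unallocated house.

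\textbf{Step 2 (augment).} Define $\mathbf{A'}$ by giving $h_1$ to $i_0$, giving $h_{t+1}$ to $a_t$ for $t\le k-1$, and leaving every other agent unchanged. This releases $A^*_{i_0}$ and uses $h_k$ instead. In $\mathbf{A'}$ every agent weakly gains and $i_0$ strictly gains, and the only change in the set of allocated houses is the swap of $A^*_{i_0}$ for $h_k$.

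\textbf{Step 3 (compare subsidies).} I will use the binary structure of Lemma~\ref{subsidy-is-1}: every $\ell(i)\in\{0,1\}$, all positive edges leave unmatched agents, and so $\ell(i)=1$ exactly when $i$ has a path of non-negative edges reaching an unmatched agent who values some other agent's house at $1$. The argument then has three parts.
\begin{itemize}
\item First, I would show $\mathbf{A'}$ is EFable. If not, a positive cycle in $G_{\mathbf{A'}}$ would start at an unmatched agent of $\mathbf{A'}$. Rerouting that cycle through the path edges and the house $h_k$ should yield a positive cycle in $G_{\mathbf{A^*}}$, or an augmenting path ending at an allocated house, which is excluded by Step 1.
\item Second, I would show $\ell'(i)\le\ell(i)$ for every $i$. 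This follows because unmatched agents in $\mathbf{A'}$ form a subset of those in $\mathbf{A^*}$, and houses valued $1$ by them are held by the same agents, except that $h_1,\dots,h_k$ have shifted along the path.
\item Third, I would show strictness. Since $i_0$ envied $a_1$, we have $\ell(i_0)=1$. I aim to show that some agent loses its positive subsidy, or, failing that, to iterate the augmentation until the matching is maximum and then compare with the $\mathbf{0}$-subsidy case.
\end{itemize}

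\textbf{Main obstacle.} The delicate step is strictness in Step 3. A newly matched $i_0$ can still have $\ell'(i_0)=1$ through a zero-weight path to another envious agent, so the total may not drop. My first attempt would be a potential argument. Among all minimum-subsidy outcomes with positive total, I would pick one maximizing the matching size. Steps 1--3 give a contradiction if the subsidy does not increase. The strict version would then follow by showing that positive total subsidy forces an envious unmatched agent whose envy is removed by augmentation.

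Finally, the ``equivalently'' part follows from Berge's theorem. For Pareto optimality, any Pareto improvement in a binary instance strictly enlarges the set of matched agents while keeping all previously matched ones. The symmetric difference of the two matchings then contains an augmenting path starting at a newly matched agent.
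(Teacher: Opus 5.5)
\paragraph{Where the proof breaks.} Your Step~1 is correct, and it supplies a detail the paper only asserts. Step~3, however, has a genuine gap, and it sits exactly where the hypothesis $\sum_i p^*_i>0$ has to be used; Steps~1--3 never use it.

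Without that hypothesis your claim $\ell'(i)\le\ell(i)$ is false. Take the instance at the end of Section~2: $n$ identical agents who value one extra house $h^\dagger$ at $1$, with the $n$ zero-valued houses allocated. There $\ell\equiv 0$ and the augmenting path is $(i_0,h^\dagger)$. After augmenting, every other agent envies $i_0$, so $\ell'(i)=1>\ell(i)$. Your justification overlooks that $h_k$ is a newly allocated house, and it can create fresh envy from unmatched agents. So strictness is not merely ``delicate'': the argument as written cannot close.

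The fallback potential argument does not rescue it. Choosing a minimum-subsidy outcome that maximizes matching size would at best show that \emph{some} minimum-subsidy outcome is a maximum matching. The lemma asserts this for \emph{every} minimum-subsidy outcome with positive total, and the fallback itself still rests on the unproved $\ell'\le\ell$.

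\paragraph{The missing idea.} Since $\mathbf{P}^*$ is minimum, $p^*_i=\ell(i)$ for all $i$ by Theorem~\ref{theorem:min-subsidy-ell(i)}. Moreover $\ell(i)\in\{0,1\}$, since weights are integers, $\ell(i)\ge 0$, and $\ell(i)\le 1$ by Lemma~\ref{subsidy-is-1}.
\begin{enumerate}
\item A positive total gives some $k$ with $p^*_k=1$.
\item Every unmatched agent values its own house at $0$ and must not envy $k$, so it has subsidy $1$. In particular $p^*_{i_0}=1$.
\item Propagate along the path: $i_0$ likes $h_1$, which forces $p^*_{a_1}=0$; and $a_{t-1}$ likes $h_t$, which forces $p^*_{a_t}\le p^*_{a_{t-1}}=0$.
\item Take $\mathbf{A}'$ from your Step~2. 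Set $p'_{i_0}=0$ and $p'_i=p^*_i$ for every other $i$.
\end{enumerate}

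Now check envy-freeness of $(\mathbf{A}',\mathbf{P}')$ directly.
\begin{itemize}
\item Every agent has utility at least $1$: matched agents through their house, unmatched agents through subsidy $1$.
\item Every changed bundle (those of $i_0,a_1,\dots,a_{k-1}$) carries zero subsidy and is worth at most $1$ to any observer. So nobody envies a changed bundle.
\item For unchanged bundles, every observer's utility is the same as before: $i_0$ and each $a_t$ have utility exactly $1$ before and after. Hence envy-freeness is inherited from $(\mathbf{A}^*,\mathbf{P}^*)$.
\end{itemize}
This envy-free outcome has total subsidy $\sum_i p^*_i-1$, which is a contradiction. It also shows that the obstacle you anticipated, $\ell'(i_0)=1$, cannot occur, and it makes your separate EFability and $\ell'\le\ell$ steps unnecessary.

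This is essentially the paper's argument. Your Berge/Pareto-optimality paragraph is fine.
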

\begin{proof}
    We prove the lemma by contradiction.
    Suppose that $(\mathbf{A^*}, \mathbf{P^*})$ is a minimum-subsidy envy-free outcome with $\sum_{i \in N} p^*_i > 0$, and allocation $\mathbf{A^*}$ admits an augmenting path in $G_\mathcal{I}$.
    Since the total subsidy $\sum_{i \in N}p_{i}^{*}>0$, there exists an agent $k$ who receives a subsidy of $1$ by Lemma~\ref{subsidy-is-1}, i.e., $p^*_k =1$.
    It means that each unmatched agent should also get a subsidy of $1$ to maintain envy-freeness.
    Assume that the allocation $\mathbf{A^*}$ admits an augmenting path from agent $j$ to house $h^*$, as shown in Figure~\ref{fig:augmenting-path}. 
    It follows that agent $j$ is unmatched and house $h^*$ is unassigned in $\mathbf{A^*}$, which implies $p^*_j = 1$.
     By applying the augmenting path from agent $j$ to house $h^*$, we can construct a new allocation $\mathbf{A}$ in which all agents matched in $\mathbf{A^*}$ remain matched, and agent $j$ becomes matched in allocation $\mathbf{A}$.
     
    Next, we prove that $(\mathbf{A}, \mathbf{P})$ is an EF outcome by defining subsidy vector $\mathbf{P}$ as $p_j = 0$ and $p_i = p^*_i$ for all $i \in N \setminus \{j\}$.
    Let $N_0$ denote the unmatched agents and $H_0$ denote the houses allocated to $N_0$ in allocation $\mathbf{A^*}$.
    Since allocation $\mathbf{A^*}$ is EFable, we have $\forall h \in H_0,\ v_j(h)=0$ and $\mathbf{A^*}$ satisfies permutation maximality.
    
    Next we prove that there does not exist augmenting path from $N_0$ to $H_0$ with respect to allocation $\mathbf{A}$.
    Assume, for the sake of contradiction, that there exists an augmenting path from $N_0$ to $H_0$.
    Since the set of houses whose assignments differ between $\mathbf{A}$ and $\mathbf{A}^*$ contains exactly one house, this augmenting path must contain the newly matched house $h^*$ on which $\mathbf{A}$ and $\mathbf{A}^*$ differ.
    Otherwise, the same augmenting path would also exist in $\mathbf{A}^*$, implying that $\mathbf{A}^*$ does not satisfy permutation maximality.
    This contradicts the fact that $\mathbf{A}^*$ is EFable.
    Let the augmenting path in $\mathbf{A}$ be
\[
(i_0, h_1, i_1, h_2, i_2, \ldots, h^*,j' \ldots, h_0),
\]
where $i_0 \in N_0$ and $h_0 \in H_0$.
Since the node $h^*$ is the only house on which $\mathbf{A}$ and $\mathbf{A}^*$ differ, adding the alternating path $(j',\ldots,j)$, as shown in Figure~\ref{fig:augmenting-path}, yields a new path $(j,\ldots,j',\ldots,h_0)$, which is an alternating path with respect to $\mathbf{A}^*$.
Moreover, this new path starts at an unmatched agent $j$ and ends at a house $h_0 \in H_0$ that is unmatched in $\mathbf{A}^*$.
Hence, $(j, \ldots, h_0)$ constitutes an augmenting path in $\mathbf{A}^*$, which contradicts the fact that $\mathbf{A}^*$ satisfies permutation maximality.
Therefore, there is no augmenting path from $N_0$ to $H_0$ in $\mathbf{A}$. 
This implies that the allocation $\mathbf{A}$ satisfies permutation maximality and is EFable by Theorem~\ref{theorem:SAGT2019}.


    For the agents on the augmenting path, if in allocation $\mathbf{A}^*$ some matched agent on the path receives a subsidy of $1$, then all matched agents on the path must receive a subsidy of $1$ to maintain EF in $\mathbf{A}^*$. 
    Therefore, in $(\mathbf{A}, \mathbf{P})$, these matched agents are in exactly the same situation as in $\mathbf{A}^*$, and their final utilities remain unchanged from the perspective of other agents.
    Indeed, in $\mathbf{A}$, only the situation of agent $j$ has changed, so we only need to consider agent $j$. 
    For all $ i \in N \setminus \{j\},$ 
    \begin{equation*}
     \ v_j(A_j)+p_j=1=v_j(A^*_j)+p^*_j \geq v_j(A^*_i)+p^*_i=v_j(A_i)+p_i,   
    \end{equation*}
   implying agent $j$ does not envy any other agent. 
   For all $ i \in N \setminus \{j\},$
    \begin{align*}
      v_i(A_i)+p_i &= v_i(A^*_i)+p^*_i\geq v_i(A^*_j)+p^*_j=1 \\
      &\geq v_i(A_j)=v_i(A_j)+p_j,
    \end{align*} 
   implying no agent envies agent $j$.  
   
    Thus, $(\mathbf{A}, \mathbf{P})$ is an EF outcome with less subsidy than $(\mathbf{A^*}, \mathbf{P^*})$, leading to a contradiction.
Hence, the matching induced by $\mathbf{A^*}$ has no augmenting path and is maximum-cardinality. Under binary utilities, any Pareto improvement would strictly increase the number of agents receiving a house of value $1$, contradicting maximum cardinality. Therefore, $\mathbf{A^*}$ is Pareto optimal.
\end{proof}

\begin{figure}[t]
    \captionsetup{font=footnotesize}
    \centering
    \begin{tikzpicture}
     [
        node distance=1.5cm,
        on grid,
        thick,
        font=\small]
     
    \node
    [ 
        state,
        fill=orange!20,
        align=center,
        inner sep=1mm,
        minimum size=3mm,
    ] (q_0) [label=above: agent $j'$]{};
     
    \node
    [
        state,
        fill=orange!20,
        align=center,
        inner sep=1mm,
        minimum size=3mm,
    ] (q_1) [right=of q_0, label=left: ] {};
     
     \node
    [ 
        state,
        fill=orange!20,
        align=center,
        inner sep=1mm,
        minimum size=3mm,
    ] (q_2) [right=of q_1, label=left:]{};

    \node 
    [ 
        state,
        fill=green!20,
        align=center,
        inner sep=1mm,
        minimum size=3mm,
   ] (q_7) [below= of q_0, label=below: house $h^*$ ]{};
     
     \node 
    [ 
        state,
        fill=green!20,
        align=center,
        inner sep=1mm,
        minimum size=3mm,
   ] (q_3) [right= of q_7]{};

      \node
    [ 
        state,
        fill=green!20,
        align=center,
        inner sep=1mm,
        minimum size=3mm,
    ] (q_4) [right=of q_3]{};
    
    \node
    [ 
        state,
        fill=green!20,
        align=center,
        inner sep=1mm,
        minimum size=3mm,
    ] (q_5) [right=of q_4]{};

  \node
    [ 
        state,
        fill=green!20,
        align=center,
        inner sep=1mm,
        minimum size=3mm,
    ] (q_8) [right=of q_5, label=below: house $A^{*}_j$ ]{};
    
    \node
    [ 
        state,
        fill=orange!20,
        align=center,
        inner sep=1mm,
        minimum size=3mm,
    ] (q_6) [right=of q_2, label=above: agent $j$]{};
        
    \path [-]
        (q_0) edge [bend left=0, color=blue]
            node [above] {} (q_7)
        (q_3) edge [bend left=0, color=blue]
            node [above] {} (q_1)
        (q_4) edge [bend left=0, color=blue]   
            node [above] {} (q_2)
        (q_5) edge [bend left=0, color=blue]   
            node [above] {}(q_6)
            ;
        \path [-]
        (q_3) edge [bend left=0] 
            node [above] {} (q_0)
        (q_4) edge [bend left=0]   
            node [above] {} (q_1)
        (q_5) edge [bend left=0]   
            node [above] {} (q_2)
            ;
        \path [-,dashed]
        (q_6) edge [bend left=0] 
            node [above] {} (q_8)
            ;
    \end{tikzpicture}
    \captionsetup{font=small}
    \caption{Example of an augmenting path from agent $j$ to house $h^*$, where orange nodes denote agents, green nodes denote houses, black edges denote assignments in $\mathbf{A^*}$ (solid if the agent values the house at $1$, dashed if at $0$), and blue edges denote assignments in $\mathbf{A}$.
}
    \label{fig:augmenting-path}
    \end{figure}

Lemma~\ref{subsidy-is-0} implies that the allocation in a minimum subsidy envy-free outcome should correspond to a maximum cardinality matching with respect to $N$ and $H$. 
However, since the number of maximum cardinality matchings may not be bounded by $\text{poly}(n,m)$, we cannot guarantee finding the one that minimizes the total subsidy in polynomial time.
Since allocating unmatched houses to unmatched agents requires giving each of them a subsidy of $1$, such allocations may trigger the need for subsidies among matched agents if they value these unmatched houses at $1$, and this effect can further propagate within the set of matched agents. 
Therefore, determining which unmatched houses should be assigned to the unmatched agents can itself be a hard problem.
Indeed, we will next show that \problem{Minimum Subsidy for Binary} is NP-complete.

\subsection{Hardness}

Next, we show that minimizing the total subsidy required to achieve envy-freeness is NP-hard even under binary utilities, unless $\mathrm{P}=\mathrm{NP}$.


  \begin{figure*}[t]
    \captionsetup{font=footnotesize}
    \centering
    \begin{tikzpicture}
     [
        node distance=1.4cm,
        on grid,
        thick,
        font=\small]
     
    \node
    [ 
        state,
        fill=orange!20,
        align=center,
        inner sep=1mm,
        minimum size=3mm,
    ] (q_0) [label=above: $a^*_1$]{};
     
    \node
    [
        state,
         fill=green!20,
        align=center,
        inner sep=1mm,
        minimum size=3mm,
    ] (q_1) [below=of q_0, label=below: $h_{a^*_1}$] {};

     \node
    [ 
        state,
         fill=orange!20,
        align=center,
        inner sep=1mm,
        minimum size=3mm,
    ] (q_3) [right=of q_0, label=above: $a^*_2$]{};
     
      \node
    [ 
        state,
         fill=green!20,
        align=center,
        inner sep=1mm,
        minimum size=3mm,
    ] (q_4) [below=of q_3, label=below: $h_{a^*_2}$]{};
     
      \node
    [
        state,
         fill=orange!20,
        align=center,
        inner sep=1mm,
        minimum size=3mm,
    ] (q_2) [right=of q_3, label=above: $a^*_3$] {};

    \node
    [
        state,
        fill=orange!20,
        align=center,
        inner sep=1mm,
        minimum size=3mm,
    ] (q_5) [right=of q_2, label=above: $a^*_4$] {};

    \node
    [
        state,
         fill=orange!20,
        align=center,
        inner sep=1mm,
        minimum size=3mm,
    ] (q_6) [right=of q_5, label=above: $a^*_5$] {};

  \node
    [ 
        state,
         fill=green!20,
        align=center,
        inner sep=1mm,
        minimum size=3mm,
    ] (q_7) [below=of q_2, label=below: $h_{a^*_3}$]{};

    \node
    [ 
        state,
         fill=green!20,
        align=center,
        inner sep=1mm,
        minimum size=3mm,
    ] (q_8) [below=of q_5, label=below: $h_{a^*_4}$]{};

    \node
    [ 
        state,
         fill=green!20,
        align=center,
        inner sep=1mm,
        minimum size=3mm,
    ] (q_9) [below=of q_6, label=below: $h_{a^*_5}$]{};

     \node
    [
        state,
        fill=orange!20,
        align=center,
        inner sep=1mm,
        minimum size=3mm,
    ] (q_10) [right=of q_6, label=above: $a^*_6$] {};

    \node
    [ 
        state,
         fill=green!20,
        align=center,
        inner sep=1mm,
        minimum size=3mm,
    ] (q_11) [below=of q_10, label=below: $h_{a^*_6}$]{};

    \node
    [
        state,
        fill=orange!20,
        align=center,
        inner sep=1mm,
        minimum size=3mm,
    ] (q_12) [right=2.4cm of q_10, label=above: $a_1$] {};

     \node
    [
        state,
        fill=orange!20,
        align=center,
        inner sep=1mm,
        minimum size=3mm,
    ] (q_13) [right= of q_12, label=above: $a_2$] {};

    \node
    [
        state,
        fill=orange!20,
        align=center,
        inner sep=1mm,
        minimum size=3mm,
    ] (q_14) [right= of q_13, label=above: $a_3$] {};

    \node
    [
        state,
        fill=orange!20,
        align=center,
        inner sep=1mm,
        minimum size=3mm,
    ] (q_15) [right= of q_14, label=above: $a_4$] {};

     \node
    [
        state,
        fill=orange!20,
        align=center,
        inner sep=1mm,
        minimum size=3mm,
    ] (q_16) [right= of q_15, label=above: $a_5$] {};

     \node
    [
        state,
        fill=green!20,
        align=center,
        inner sep=1mm,
        minimum size=3mm,
    ] (q_17) [right= of q_11, label=below: $h^*_{1}$] {};

     \node
    [
        state,
        fill=green!20,
        align=center,
        inner sep=1mm,
        minimum size=3mm,
    ] (q_18) [right= of q_17, label=below: $h^*_{2}$] {};

     \node
    [
        state,
        fill=green!20,
        align=center,
        inner sep=1mm,
        minimum size=3mm,
    ] (q_19) [right= of q_18, label=below: $h^*_{3}$] {};

      \node
    [
        state,
        fill=green!20,
        align=center,
        inner sep=1mm,
        minimum size=3mm,
    ] (q_20) [right= of q_19, label=below: $h_1$] {};

     \node
    [
        state,
        fill=green!20,
        align=center,
        inner sep=1mm,
        minimum size=3mm,
    ] (q_21) [right= of q_20, label=below: $h_2$] {};

    \node
    [
        state,
        fill=green!20,
        align=center,
        inner sep=1mm,
        minimum size=3mm,
    ] (q_22) [right= of q_21, label=below: $h_3$] {};
    
    \path [-]
         (q_0) edge [bend left=0] 
            node [above] {} (q_1)  
         (q_5) edge [bend left=0]   
            node [above] {} (q_8) 
        (q_3) edge [bend left=0]   
            node [above] {} (q_4)
        (q_2) edge [bend left=0]   
            node [above] {} (q_7)
        (q_6) edge [bend left=0]   
            node [above] {} (q_9)
        (q_10) edge [bend left=0]   
            node [above] {} (q_11) 
        (q_0) edge [bend left=0]   
            node [above] {} (q_17)
        (q_3) edge [bend left=0]   
            node [above] {} (q_17)
        (q_2) edge [bend left=0]   
            node [above] {} (q_18)
        (q_5) edge [bend left=0]   
            node [above] {} (q_18)
        (q_6) edge [bend left=0]   
            node [above] {} (q_18)
        (q_5) edge [bend left=0]   
            node [above] {} (q_19)
        (q_6) edge [bend left=0]   
            node [above] {} (q_19)
        (q_10) edge [bend left=0]   
            node [above] {} (q_19)

         (q_12) edge [bend left=0]   
            node [above] {} (q_20)
        (q_13) edge [bend left=0]   
            node [above] {} (q_20)
        (q_14) edge [bend left=0]   
            node [above] {} (q_20)
         (q_15) edge [bend left=0]   
            node [above] {} (q_20)
        (q_16) edge [bend left=0]   
            node [above] {} (q_20)

        (q_12) edge [bend left=0]   
            node [above] {} (q_21)
        (q_13) edge [bend left=0]   
            node [above] {} (q_21)
        (q_14) edge [bend left=0]   
            node [above] {} (q_21)
         (q_15) edge [bend left=0]   
            node [above] {} (q_21)
        (q_16) edge [bend left=0]   
            node [above] {} (q_21)

        (q_12) edge [bend left=0]   
            node [above] {} (q_22)
        (q_13) edge [bend left=0]   
            node [above] {} (q_22)
        (q_14) edge [bend left=0]   
            node [above] {} (q_22)
         (q_15) edge [bend left=0]   
            node [above] {} (q_22)
        (q_16) edge [bend left=0]   
            node [above] {} (q_22)
            
      
            ;



  \node[fit=(q_0) (q_10),draw=none,inner sep=0pt] (group) {};

  \draw [decorate,decoration={brace,amplitude=6pt}] 
        ([yshift=15pt]group.north west) -- ([yshift=15pt]group.north east)
        node[midway,yshift=12pt] {$N_1$};

    \node[fit=(q_12) (q_16),draw=none,inner sep=0pt] (group) {};

  \draw [decorate,decoration={brace,amplitude=6pt}] 
        ([yshift=15pt]group.north west) -- ([yshift=15pt]group.north east)
        node[midway,yshift=12pt] {$N_2$};

  \node[fit=(q_1) (q_11),draw=none,inner sep=0pt] (group) {};

  \draw [decorate,decoration={brace,amplitude=6pt,mirror}] 
        ([yshift=-25pt]group.north west) -- ([yshift=-25pt]group.north east)
        node[midway,yshift=-12pt] {$H_1$};

  \node[fit=(q_17) (q_19),draw=none,inner sep=0pt] (group) {};

  \draw [decorate,decoration={brace,amplitude=6pt,mirror}] 
        ([yshift=-25pt]group.north west) -- ([yshift=-25pt]group.north east)
        node[midway,yshift=-12pt] {$H_2$};

  \node[fit=(q_20) (q_22),draw=none,inner sep=0pt] (group) {};

  \draw [decorate,decoration={brace,amplitude=6pt,mirror}] 
        ([yshift=-25pt]group.north west) -- ([yshift=-25pt]group.north east)
        node[midway,yshift=-12pt] {$H_3$};

    \end{tikzpicture}
    \captionsetup{font=small}
    \caption{An example of the reduction construction from the \problem{Minimum $k$-Union} instance ($U = \{1,2,3,4,5,6\}$, $S_1 = \{1,2\}$, $S_2 = \{3,4,5\}$, $S_3 = \{4,5,6\}$, $k=2$, $q=4$) to the corresponding \problem{Minimum Subsidy for Binary} instance.}
    \label{figure:reduction}
    \end{figure*}

\begin{theorem}
[Hardness of \problem{Minimum Subsidy for Binary}]
\label{theorem:reduction}
   The problem of determining whether for a given positive integer $\gamma>0$, there exists an envy-free outcome $(\mathbf{A,P})$ such that $\sum_{i \in N}p_i \leq \gamma$, is NP-complete even when all agents have binary utilities.
\end{theorem}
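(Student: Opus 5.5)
The plan is to reduce from \problem{Minimum $k$-Union}, following the gadget of Figure~\ref{figure:reduction}. Membership in NP is easy. A certificate is an allocation $\mathbf{A}$. We check EFability by testing $G_{\mathbf{A}}$ for positive cycles (Theorem~\ref{theorem:SAGT2019}). We then compute the optimal subsidies $p_i=\ell(i)$ by longest paths in a graph without positive cycles (Theorem~\ref{theorem:min-subsidy-ell(i)}) and compare $\sum_i p_i$ with $\gamma$.

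For hardness, take an instance with universe $U$, sets $S_1,\dots,S_m$, and integers $k,q$; the question is whether some $k$ sets have union of size at most $q$. The construction has the following parts.
\begin{itemize}
\item For each element $u\in U$, an element agent $a^*_u\in N_1$ with a private house $h_{a^*_u}\in H_1$ valued $1$.
\item For each set $S_j$, a set house $h^*_j\in H_2$, valued $1$ by exactly the agents $a^*_u$ with $u\in S_j$.
\item A block of filler agents $N_2$ who value exactly the houses $H_3$, with $|N_2|=|H_3|+k$.
\end{itemize}
Set $\gamma=k+q$. The filler block forces precisely $k$ agents of $N_2$ to be unmatched in every maximum matching. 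By Lemma~\ref{subsidy-is-0}, we may restrict attention to maximum-cardinality matchings, since the optimum is positive: the unmatched filler agents envy the matched ones. Each unmatched agent receives subsidy exactly $1$ (Lemma~\ref{subsidy-is-1} and the envy argument), which contributes $k$.

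For completeness (yes $\Rightarrow$ yes): given sets $S_{j_1},\dots,S_{j_k}$ with union $W$, $|W|\le q$, I will do the following. Match every $a^*_u$ to its private house, match $|H_3|$ filler agents inside $H_3$, and give the $k$ unmatched filler agents the houses $h^*_{j_1},\dots,h^*_{j_k}$. I then verify by hand that the subsidy vector is as follows: $1$ for these $k$ agents, $1$ for each $a^*_u$ with $u\in W$ (who would otherwise envy $0+1<1+1$), and $0$ for everyone else. Envy-freeness is then checked using Observation~\ref{observation:is-1}. This gives total $k+|W|\le\gamma$.

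For soundness (yes $\Leftarrow$ yes): start from an optimal outcome, which is a maximum matching. The $k$ unmatched filler agents each hold some house valued $0$ by them. If all of these are set houses $h^*_j$, then the agents who must be subsidized are the element agents adjacent to those houses. These are exactly the elements of the union of the chosen sets, so the cost is $k+|\bigcup S_j|$. Sets may be repeated or padded arbitrarily to reach exactly $k$.

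The main obstacle is ruling out the other configurations. In one, an unmatched filler agent holds a private house $h_{a^*_u}$ while $a^*_u$ is rerouted to some $h^*_j$. In another, it holds a house that nobody values. Either way no element agent need envy, and the cost collapses to $k$. I first plan to check whether the exact sizes already prevent this. The houses left over after a maximum matching are $m+|U|+|H_3|-|U|-|H_3|=m$ slots among $H_1\cup H_2$, and I would count whether a rerouting must make some other set house, adjacent to many element agents, unavoidable. If this count does not force it, I will modify the gadget. The intended fix is to make every house of $H_1$ shared, for instance by giving each element agent $\gamma+1$ identical twin copies. Then any rerouting that frees a private house for a subsidized holder forces at least $\gamma+1$ twin agents to envy, exceeding the budget. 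This makes the only affordable configurations those in which the $k$ subsidized houses are set houses. The remaining checks, namely that the twin copies do not alter the completeness computation and that the reduction is polynomial, are routine.
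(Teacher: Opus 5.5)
Your construction (with $|H_3|=m$, $|N_2|=m+k$, $\gamma=k+q$), your completeness direction and your NP-membership argument agree with the paper. The gap is soundness. You leave it as a plan, you misdiagnose the obstacle, and your fallback repair would break the reduction.

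\textbf{The rerouting case does not collapse to cost $k$.} Suppose an unmatched filler $f$ holds $h_{a^*_u}$. In a maximum matching $a^*_u$ is matched, so she holds some $h^*_j$ with $u\in S_j$. Then $w(a^*_u,f)=v_{a^*_u}(h_{a^*_u})-v_{a^*_u}(h^*_j)=0$. Also $f$ has a $+1$ edge to any filler holding a house of $H_3$, so $\ell(a^*_u)=1$. Every $a^*_{u'}$ with $u'\in S_j$ is matched and values $h^*_j$ at $1$, so she has a $0$-weight edge to $a^*_u$ and is subsidized as well. The filler therefore ``selects'' $S_j$ exactly as if it held $h^*_j$ directly. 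The paper encodes this in Claim~\ref{claim:h_e-should-be-allocated}: exchanging $h_{a^*_u}$ and $h^*_j$ does not increase the subsidy.

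\textbf{The ``house nobody values'' case does not arise.} A maximum matching uses all of $H_3$, so the $k$ unmatched fillers hold houses in $H_1\cup H_2$. Each such house is valued by someone, except $h^*_j$ with $S_j=\emptyset$, which is harmless.

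\textbf{The counting that closes soundness.} For each unmatched filler $f$, let $j(f)$ be the index of the set house held either by $f$ or by the owner of the private house $f$ holds. Each set house has one holder, so these $k$ indices are distinct and form $T$ with $|T|=k$. By the zero-weight paths above, every element agent of $\bigcup_{j\in T}S_j$ has $\ell=1$. Theorem~\ref{theorem:min-subsidy-ell(i)} then gives $\sum_i p_i\ge k+|\bigcup_{j\in T}S_j|$, hence $|\bigcup_{j\in T}S_j|\le q$. No padding or repetition of sets is needed.

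\textbf{The twin-copy fix is invalid.} With $\gamma+1$ identical copies of each element agent, the copies also value the set houses. In your intended yes-configuration, every copy of every covered element must then be subsidized. The cost becomes $k+(\gamma+1)|W|>\gamma$ whenever $W\neq\emptyset$, so completeness fails. Rescaling the budget to $k+rq$ (as the paper does in Appendix~\ref{Appendix:approximation-preserving}, for approximation-preservation) restores completeness. But then rerouting costs exactly as much as direct assignment, so the copies penalize nothing extra. The repair is unnecessary; the local envy-propagation argument above is what your proof was missing.
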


\begin{proof}[Proof Sketch.]
    Since for each envy-free outcome $(\mathbf{A,P})$, we can compute $\sum_{i \in N}p_i$ in polynomial time, this problem is in NP.
    We prove the NP-hardness of this problem by reducing from the decision version of \problem{Minimum $k$-Union} problem. 
    In this problem, we are given a finite set $U:=\{e_1,e_2, \cdots, e_n\}$ of elements, subsets $S_1, S_2, \ldots, S_m$ of $U$, and positive integers $q, k$ such that $q \le |U|$ and $k < m$; 
the goal is to determine whether there exists a subset $T \subseteq [m]$ such that $|T| = k$ and $\left| \bigcup_{t \in T} S_t \right| \le q$.
It is known that this problem is NP-complete~\cite{vinterbo2002note}.

Suppose that we are given an instance of the decision version of \problem{Minimum $k$-Union}. Then we construct an instance of our house allocation problem as follows.
\begin{itemize}
    \item Define the agent set $N := N_1 \cup N_2$, where $N_1:=\{a^*_e \mid e\in U\}, N_2:=\{a_t \mid t\in [m+k]\}$;
    \item Define the house set $H := H_1 \cup H_2 \cup H_3$, where $H_1:=\{h_{a^*_e} \mid e\in U\}, H_2:=\{h^*_j \mid j \in [m]\} \ \text{and}\ H_3:=\{h_t \mid t\in [m]\}$.
    \item For each element $e \in U$, define the utility function $v_{a^*_e}: H \to \{0,1\}$ by
    \[
    v_{a^*_e}(h) :=
    \begin{cases}
        1 & \text{if }  h=h_{a^*_e}; \\
        1 & \text{if } h = h^*_j \text{ such that } e \in S_j;\\
        0 & \text{otherwise.}
    \end{cases}
    \]
    \item For each integer $t \in [m+k]$, define the utility function $v_{a_{t}}: H \to \{0,1\}$ by
    \[
    v_{a_t}(h) :=
    \begin{cases}
        1 & \text{if }  h \in H_3; \\
        0 & \text{otherwise.}
    \end{cases}
    \]
\end{itemize}

Clearly, the construction takes polynomial time. 
Note that there are $(n + m +k)$ agents and $(n + m + m)$ houses, and the number of houses exceeds the number of agents.

For example, if $U = \{1,2,3,4,5,6\}$, $S_1 = \{1,2\}$, $S_2 = \{3,4,5\}$, $S_3 = \{4,5,6\}$, $k=2$, $q=4$, then the corresponding reduced house allocation instance is given as in Figure~\ref{figure:reduction}.

As shown in Figure~\ref{figure:reduction}, at least one house in $H_3$ must be allocated, and at least $k$ agents in $N_2$ cannot obtain houses from $H_3$, which causes each of these $k$ agents to require a subsidy of $1$.
Allocating houses in $H_1$ to the corresponding agents in $N_1$ is optimal, since such allocations do not cause envy among other agents.
Hence, we need to assign $k$ houses from $H_2$ to these $k$ agents in $N_2$.
However, houses in $H_2$ have value $1$ for some agents in $N_1$, implying that these agents must also receive a subsidy of $1$ to ensure envy-freeness with respect to the $k$ agents in $N_2$.
Thus, finding an allocation with a total subsidy of $k+q$ is equivalent to selecting $k$ houses from $H_2$ whose corresponding union set in $N_1$ has size $q$ (see Appendix~\ref{appendix:Missing-Proof-of-NP-hard} for the detailed proof).
\end{proof}

We remark that \problem{Minimum $k$-Union} has known hardness of $O(n^{1/4})$-approximation under ``Dense versus Random'' conjecture for DkS to hypergraphs~\cite{chlamtavc2017minimizing}.
Indeed, the reduction in Theorem~\ref{theorem:reduction} can be modified to be approximation-preserving, which directly yields the hardness of approximation for \problem{Minimum Subsidy for Binary} (see Appendix~\ref{Appendix:approximation-preserving} for the full proof).

\section{\texorpdfstring{$k$}{k} Types of Agents}
\label{section:k-types}


In this section, we consider the case with $k$ types of agents having binary utilities. 
Suppose that the agent set is partitioned into $N_1 \cup N_2 \cup \cdots \cup N_k$, where agents in $N_i$ have the same binary valuation function $v_i: H \to \{0,1\}$.
Lemma~\ref{subsidy-is-0} implies that the potential optimal EFable allocation should be maximum cardinality matching.
It remains to identify a maximum cardinality matching for which the minimum number of agents require subsidies.

Next, we introduce the well-known Dulmage–Mendelsohn decomposition~\cite{Dulmage_Mendelsohn_1958}, which partitions the vertices of a bipartite graph into subsets with the property that two adjacent vertices belong to the same subset if and only if they are matched together in some perfect matching of the graph.
Aigner-Horev and Segal-Halevi~\cite{uniqueAigner-HorevS22} extended this concept to the so-called EFM partition, which can be used to identify envy-free matchings.
We reformulate their main theorem accordingly.

\begin{theorem}[\textnormal{Aigner-Horev and Segal-Halevi}~\cite{uniqueAigner-HorevS22}]
\label{theorem-EFM-partition}
    Every bipartite graph $G=(X\cup Y, E)$ admits a unique partition $X=X_S \cup X_L$ and $Y=Y_S \cup Y_L$ satisfying the following three conditions: 
    \begin{enumerate}[label=(\alph*)]
        \item There are no edges between $X_S$ and $Y_L$;
        \item The subgraph $G[X_S,Y_S]$ induced by $X_S$ and $Y_S$ admits a $Y_S$-saturating matching;
        \item The subgraph $G[X_L, Y_L]$ induced by $X_L$ and $Y_L$ admits an $X_L$-saturating matching.
    \end{enumerate}
\end{theorem}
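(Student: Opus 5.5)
The plan is to prove existence by an alternating-path construction from a maximum matching, and to prove uniqueness only after strengthening condition~(b), because as literally stated uniqueness fails. For existence, fix a maximum matching $M$ of $G$. Let $Z$ be the set of vertices reachable from the $M$-unmatched vertices of $X$ by $M$-alternating paths. These paths leave $X$ along non-matching edges and leave $Y$ along matching edges. Set $X_S=Z\cap X$, $Y_S=Z\cap Y$, $X_L=X\setminus Z$ and $Y_L=Y\setminus Z$. I would then verify the three conditions one at a time.

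\begin{itemize}[leftmargin=*]
\item \emph{Condition (a).} Every neighbour of a vertex $x\in X_S$ is in $Z$. A non-matching neighbour of $x$ is reached by extending the alternating path to $x$. The matching neighbour of $x$ is the vertex from which $x$ was reached, since unmatched starting vertices have no matching neighbour. Hence there are no edges between $X_S$ and $Y_L$.
\item \emph{Condition (b).} Every $y\in Y_S$ is $M$-matched, since otherwise the alternating path to $y$ would be augmenting, contradicting the maximality of $M$. Its partner lies in $X_S$ by the definition of $Z$. So $M$ restricted to $G[X_S,Y_S]$ saturates $Y_S$.
\item \emph{Condition (c).} Every $x\in X_L$ is matched, because all unmatched vertices of $X$ lie in $Z$. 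Its partner $y$ cannot lie in $Z$, since then $x$ would be reached from $y$ through the matching edge. So $M$ restricted to $G[X_L,Y_L]$ saturates $X_L$.
\end{itemize}

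For uniqueness, take any partition satisfying (a)--(c). Let $M_1$ be a $Y_S$-saturating matching of $G[X_S,Y_S]$ and $M_2$ an $X_L$-saturating matching of $G[X_L,Y_L]$, and put $M=M_1\cup M_2$. By (a), the set $X_L\cup Y_S$ is a vertex cover of $G$, and it has size $|M|$. By K\H{o}nig's theorem, $M$ is therefore a maximum matching. From $X_S$, edges go only into $Y_S$, and matching edges from $Y_S$ return to $X_S$. So the reachable set $Z$ computed from this $M$ satisfies $Z\subseteq X_S\cup Y_S$.

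The main obstacle is the reverse inclusion $X_S\cup Y_S\subseteq Z$, and as literally stated it is false. For a single edge $xy$, both $(X_S,Y_S)=(\{x\},\{y\})$ and $(X_L,Y_L)=(\{x\},\{y\})$ satisfy (a)--(c). I would therefore follow the original formulation of Aigner-Horev and Segal-Halevi and strengthen (b) to a positive-surplus condition: every nonempty $Y'\subseteq Y_S$ satisfies $|N_G(Y')\cap X_S|\ge |Y'|+1$.

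Under this strengthened condition, the inclusion $X_S\subseteq Z$ would follow by contradiction. Let $X'=X_S\setminus Z$ and $Y'=Y_S\setminus Z$. Every $x\in X'$ is matched, because unmatched vertices of $X$ lie in $Z$. By (a) its partner lies in $Y_S$, and that partner is not in $Z$, since otherwise $x$ would be reached through the matching edge. Conversely, every $y\in Y'$ is matched into $X_S$ by (b), and its partner is not in $Z$, since otherwise $y$ would be reached from it along that edge. So $M$ pairs $X'$ with $Y'$ and $|X'|=|Y'|$. Moreover, the $X_S$-neighbours of $Y'$ avoid $Z$: a neighbour in $Z$ would put $y$ in $Z$, whether by a non-matching edge or, for $y$'s own partner, by definition. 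Hence $N_G(Y')\cap X_S\subseteq X'$, which has size $|Y'|$. This contradicts positive surplus whenever $Y'\neq\emptyset$, and if $Y'=\emptyset$ then $X'=\emptyset$ as well. So $X_S\subseteq Z$.

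Then $Y_S\subseteq Z$ too, because each $y\in Y_S$ has an $X_S$-neighbour by surplus, and that neighbour is in $Z$. This identifies the partition with the one from the construction. Independence from the choice of maximum matching $M$ then follows by applying the same argument to the partition built from any other maximum matching.
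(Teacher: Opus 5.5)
The paper gives no proof of this theorem; it quotes it from Aigner-Horev and Segal-Halevi. Your single-edge example is correct. As transcribed here, without their positive-surplus requirement, uniqueness fails, and your strengthened (b) is the right repair. Your alternating-path construction is sound. So is your argument that every partition satisfying (a), the strengthened (b) and (c) equals $(Z\cap X,\,Z\cap Y)$.

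\textbf{The gap.} The two halves now prove different statements. Existence is checked only against the weak (b), and uniqueness only against the strong one. You never show that the partition built from $Z$ has positive surplus, so the corrected theorem has no existence proof. Your closing sentence does not finish uniqueness either. Running the uniqueness argument on ``the partition built from any other maximum matching'' requires that partition to satisfy the strengthened (b), which is exactly the unverified property. As phrased, it also does not explain why the reachable sets of two different maximum matchings agree.

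\textbf{Repairing positive surplus.} Take a nonempty $Y'\subseteq Y_S$. Choose $y\in Y'$ whose shortest $M$-alternating path from an $M$-unmatched vertex of $X$ is as short as possible, and let $x$ be the predecessor of $y$ on such a path. Then $x\in X_S$ and $xy\in E$. Either $x$ is unmatched, or its partner precedes it on the path. In the second case the partner is reached by a strictly shorter alternating path, so it lies outside $Y'$. In both cases $x$ differs from the $|Y'|$ partners of $Y'$, which lie in $N_G(Y')\cap X_S$ by your proof of (b). Hence $|N_G(Y')\cap X_S|\ge|Y'|+1$.

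\textbf{Repairing uniqueness.} Observe that $X_L\cup Y_S$ is a vertex cover whose size equals the maximum matching size. So every edge of any maximum matching $M'$ contains exactly one cover vertex, and every cover vertex is matched by $M'$. Thus $M'$ matches $X_L$ into $Y_L$ and $Y_S$ into $X_S$. These are the only properties of $M_1\cup M_2$ your inclusion argument uses. It therefore gives $X_S\cup Y_S=Z(M')$ for every maximum matching $M'$, where $Z(M')$ is the reachable set computed from $M'$. Fixing one $M'$, any two partitions satisfying the strengthened conditions coincide.
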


For any binary instance $\mathcal{I}$, the corresponding bipartite graph $G_\mathcal{I}=(N\cup H, E)$ can be uniquely partitioned into sets satisfying the conditions of Theorem~\ref{theorem-EFM-partition}, where the agent set $N$ corresponds to $X$ and the house set $H$ corresponds to $Y$.
The unique EFM partition can be found in polynomial time~\cite{uniqueAigner-HorevS22}.
Based on the unique EFM partition $N=N_S \cup N_L$ and $H=H_S \cup H_L$ of $G_\mathcal{I}$, we then characterize the properties that the EFable allocation induced by a maximum cardinality matching must satisfy in order to achieve the minimum total subsidy (with full proof provided in Appendix~\ref{appendix-section:k-types}).

\begin{lemma}
\label{lemma:min-subsidy-conditions}
The EFable allocation $\mathbf{A^*}$ induced by a maximum cardinality matching $\mathcal{M}$ in $G_\mathcal{I}$ that requires a minimum amount of subsidy must have the following structure:
\begin{enumerate}[label=(\arabic*).]
    \item The agent set $N_{S}$ whose neighbor set $H_{S}$ in $G_\mathcal{I}$ satisfies $|H_{S}|<|N_{S}|$, and all houses in $H_{S}$ are allocated to agents in $N_{S}$. 
    Let $t=|N_{S}|-|H_{S}|$ be the number of unmatched agents, who are allocated houses they do not like, and receive a subsidy of 1.
    \item The remaining agents in $N_{L}=N\setminus N_{S}$ are all matched.
    \item There is a subset $\hat{N}_{L}\subseteq N_{L}$ such that each agent in $\hat{N}_{L}$ receives a subsidy of $1$.
    \item Agents of the same type belong to exactly one of $N_{S}, \hat{N}_{L}, N_{L}\setminus\hat{N}_{L}$. 
\end{enumerate}
\end{lemma}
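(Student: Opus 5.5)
We will use the EFM partition of Theorem~\ref{theorem-EFM-partition} for $G_\mathcal{I}$, with $X=N$ and $Y=H$, together with Lemma~\ref{subsidy-is-0} and Lemma~\ref{subsidy-is-1}. The plan has three steps: pin down the shape of any maximum-cardinality matching relative to the partition, which gives items (1) and (2); read off the subsidy pattern in $\{0,1\}$, which gives item (3); and then argue by an exchange among same-type agents, which gives item (4).

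\textbf{Items (1) and (2).} By condition (a), every neighbor of $N_S$ lies in $H_S$. By condition (b), there is a matching saturating $H_S$ inside $G[N_S,H_S]$, and by condition (c) there is one saturating $N_L$ inside $G[N_L,H_L]$. Their union is a matching of size $|H_S|+|N_L|$. Any matching matches at most $|H_S|$ agents of $N_S$, since these agents have neighbors only in $H_S$, and at most $|N_L|$ agents of $N_L$. So the maximum cardinality is exactly $|H_S|+|N_L|$.

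Consequently, every maximum-cardinality matching $\mathcal{M}$ saturates $H_S$ using agents of $N_S$ only and saturates $N_L$. If some agent of $N_L$ were matched into $H_S$, then some house of $H_S$ would be unavailable to $N_S$, and the count would drop below $|H_S|+|N_L|$. This gives (1) and (2). We take $|H_S|<|N_S|$ as the case where the minimum subsidy is positive. Otherwise every agent is matched, and the subsidy is $0$ with nothing to prove.

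The $t=|N_S|-|H_S|$ unmatched agents all lie in $N_S$. They receive houses of value $0$ and envy every matched agent, so each needs subsidy exactly $1$ by Lemma~\ref{subsidy-is-1}. More precisely, since the total subsidy is positive, some agent receives $1$, and an unmatched agent must match that agent's utility.

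\textbf{Item (3).} By Lemma~\ref{subsidy-is-1}, every subsidy is $0$ or $1$. We let $\hat N_L$ be the set of agents of $N_L$ with $p_i=1$. This set is exactly the agents $i$ having a path of weight $1$ in $G_{\mathbf{A}^*}$ to an unmatched agent who holds a house liked by $i$ (or a chain of such relations). Item (3) is therefore a definition, and we will record this chain characterization for use in item (4).

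\textbf{Item (4), the main obstacle.} We must show that agents of one type are not split across $N_S$, $\hat N_L$, and $N_L\setminus\hat N_L$.

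For $N_S$ versus $N_L$, this follows from the uniqueness in Theorem~\ref{theorem-EFM-partition}. Agents of the same type have identical neighborhoods. The Hall-violator structure defining $N_S$ (the maximal "surplus" set, computed as the union of tight sets) is closed under identical neighborhoods: if $i\in N_S$ and $i'$ has $N(i')=N(i)\subseteq H_S$, then moving $i'$ into $X_S$ preserves (a) through (c) by adding $i'$ as an extra unmatched vertex. Uniqueness then forces $i'\in N_S$. I would verify this by checking each condition.

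For $\hat N_L$ versus $N_L\setminus\hat N_L$, I plan an exchange argument on a minimum-subsidy $\mathbf{A}^*$. Suppose same-type agents $i\in\hat N_L$ and $i'\notin\hat N_L$. Since $p_{i'}=0$, agent $i'$ does not value any house held by a subsidized agent more than its own, and every such valued house gives value $1$. The reason $i$ has $p_i=1$ is an envy chain through houses that both agents like equally, because $v_i=v_{i'}$. Then $i'$ values those same houses, so $i'$ would need subsidy $1$ as well; otherwise $i'$ envies the next agent in the chain. More cleanly: $\ell(i)=\ell(i,j)$ for some $j$, and the first edge weight satisfies $w(i',x)\ge w(i,x)$ for every $x$, because $i$ and $i'$ value their own houses equally (both matched, both value $1$). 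Hence $\ell(i')\ge\ell(i)=1$, a contradiction.

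This last inequality argument looks like it settles (4) directly for any EFable allocation, not only optimal ones. I expect the subtle point to be the $N_S$ part, where matched and unmatched agents of the same type coexist within $N_S$. That is why the statement lists $N_S$ as a single class rather than splitting it by subsidy.
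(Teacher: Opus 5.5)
Your proposal is correct, and for items (1)--(3) it is the paper's argument: the maximum matching size is $|H_S|+|N_L|$, so every maximum matching fills $H_S$ from $N_S$ and saturates $N_L$. Each of the $t$ unmatched agents must then reach the utility $1$ of some subsidized agent.

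The differences are in item (4).

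\emph{The $N_S$ versus $N_L$ split.} The paper never invokes uniqueness. If $i\in N_S$ and a same-type $i'\in N_L$, condition (c) matches $i'$ to some $h\in H_L$, so $i$ is also adjacent to $h$, contradicting (a). Your uniqueness route works relative to the theorem as the paper states it, but it is the more fragile of the two. With only (a)--(c), uniqueness is not literally true: a single edge $xy$ admits both $X_S=\{x\},Y_S=\{y\}$ and $X_L=\{x\},Y_L=\{y\}$. Uniqueness needs a strict (positive-surplus) form of (b), as in the original theorem. Moving $i'$ into $X_S$ happens to preserve that as well, but the one-line adjacency argument needs none of it.

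\emph{The $\hat N_L$ split.} Your observation that $w(i',x)=w(i,x)$ for all $x$, hence $\ell(i')\ge\ell(i)=1$, is correct and holds in any EFable allocation. The paper reads the same weight-$0$ edge $i'\to i$ directly as envy: $i'$ has utility $1$ but values $i$'s house plus subsidy at $v_{i'}(A_i)+p_i=2$.

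\emph{A step you omit.} The paper's proof of (3) also shows that the matched agents of $N_S$ receive zero subsidy. Every house held by $N_L$ or by an unmatched agent lies in $H_L$ and is worth $0$ to them. The literal statement of (3) does not demand this, but it is what makes the total subsidy equal $t+|\hat N_L|$ in the subsequent algorithm, so you should include it. Your chain characterization of $\hat N_L$ is imprecise and never used, so you can drop it.
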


\begin{observation}
\label{observation:partitions-number}
There are at most $2^{k}$ possible partitions $(N_S,\hat{N}_L, N_L \setminus \hat{N}_L)$ of the agents.
\end{observation}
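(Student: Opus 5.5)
The count follows from item (4) of Lemma~\ref{lemma:min-subsidy-conditions}: a triple $(N_S,\hat{N}_L,N_L\setminus\hat{N}_L)$ is determined by one decision per type. We do not take item (4) as a free three-way choice; we use the uniqueness in Theorem~\ref{theorem-EFM-partition} to pin down one of the three parts, leaving a binary choice per type.

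\textbf{Step 1: $N_S$ is fixed.} The EFM partition $N=N_S\cup N_L$, $H=H_S\cup H_L$ of $G_\mathcal{I}$ is unique by Theorem~\ref{theorem-EFM-partition}. So $N_S$, and hence $N_L$, depends only on the instance and not on the chosen maximum cardinality matching $\mathcal{M}$. By item (4), each type class is contained entirely in $N_S$ or entirely in $N_L$. Indeed, agents of the same type have identical neighbourhoods in $G_\mathcal{I}$, so they are symmetric, and uniqueness of the partition forces them onto the same side. Therefore the set of types lying in $N_S$ is determined.

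\textbf{Step 2: one binary choice per type in $N_L$.} By item (4), each type inside $N_L$ lies either entirely in $\hat{N}_L$ or entirely in $N_L\setminus\hat{N}_L$. Hence a configuration is specified by choosing a subset of the types contained in $N_L$ to form $\hat{N}_L$. There are at most $k$ such types, so there are at most $2^{k}$ choices. Each choice yields exactly one partition $(N_S,\hat{N}_L,N_L\setminus\hat{N}_L)$.

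\textbf{Main obstacle.} The delicate point is Step 1, namely that same-type agents cannot be split by the EFM partition. I would argue this via an automorphism of $G_\mathcal{I}$. Swapping two agents with identical neighbourhoods maps $G_\mathcal{I}$ to itself, so it maps any partition satisfying conditions (a)--(c) to another partition satisfying them. By uniqueness the partition is invariant under this swap, so the two agents lie on the same side. If this argument is not accepted, the cruder bound $3^{k}$ still follows directly from item (4), and the improvement to $2^{k}$ rests entirely on this symmetry observation.
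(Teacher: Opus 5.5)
Your proposal is correct and follows essentially the same route as the paper. $N_S$ is fixed by the uniqueness of the EFM partition, and item~(4) of Lemma~\ref{lemma:min-subsidy-conditions} then leaves a binary choice between $\hat{N}_L$ and $N_L\setminus\hat{N}_L$ for each of the at most $k$ remaining types. Your automorphism argument that same-type agents are never split by the EFM partition is a valid alternative to the paper's Part~(4) argument, which says that an agent in $N_L$ has a neighbour in $H_L$ by condition~(c), so a same-type agent in $N_S$ would share that neighbour and violate condition~(a).
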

\begin{proof}
By condition~(4) of Lemma~\ref{lemma:min-subsidy-conditions}, agents of the same type belong to exactly one of $N_S$, $\hat{N}_L$, and $N_L \setminus \hat{N}_{L}$.
Since $N_S$ can be efficiently identified via the unique EFM partition, each of the remaining at most $k$ agent types must be assigned either to $\hat{N}_L$ or to $N_L \setminus \hat{N}_{L}$.
Hence, the total number of possible partitions is bounded by $2^{k}$.
\end{proof}


In particular, the partition $(N_{S}, \hat{N}_{L}, N_{L}\setminus\hat{N}_{L})$ corresponding to the minimum total subsidy allocation $\mathbf{A}^*$ must be among the $2^{k}$ possible partitions.
For each such partition, we then try to construct the EFable allocations with subsidies that meet the conditions of Lemma~\ref{lemma:min-subsidy-conditions}.

\paragraph{General Idea.} 
Given a partition $(N_{S}, \hat{N}_{L}, N_{L}\setminus\hat{N}_{L})$, the total subsidy induced by this partition is $|N_S|-|H_S|+|\hat{N}_{L}|$, according to properties (1) and (3) of Lemma~\ref{lemma:min-subsidy-conditions}.
Therefore, determining whether this total subsidy is achievable reduces to verifying the feasibility of the partition.
Specifically, we can examine whether every agent can be assigned a house under the construction described in Lemma~\ref{lemma:min-subsidy-conditions} to verify the feasibility.
If such a partition is feasible, then an EFable allocation $\mathbf{A}$ with subsidy $|N_S|-|H_S|+|\hat{N}_{L}|$ can be constructed according to a maximum cardinality matching which satisfies conditions of Lemma~\ref{lemma:min-subsidy-conditions}. 
By checking the feasibility of all $2^{k}$ possible partitions, we can determine the minimum achievable total subsidy and construct a corresponding optimal allocation $\mathbf{A}^*$.
To this end, we update the bipartite graph to try to construct an allocation satisfying the conditions of Lemma~\ref{lemma:min-subsidy-conditions}.

\begin{observation}
\label{observation:types-of-agents}
There are at most $2^{k}$ types of houses.
\end{observation}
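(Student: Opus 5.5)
The plan is to define the type of a house by the set of agent types that value it, and then count the possible such sets. With $k$ agent types and binary valuations $v_1,\ldots,v_k : H\to\{0,1\}$, we associate to each house $h\in H$ its \emph{profile}
\[
\tau(h) := \{\, i\in[k] \mid v_i(h)=1 \,\} \subseteq [k],
\]
which can equivalently be written as the vector $(v_1(h),\ldots,v_k(h))\in\{0,1\}^k$. We say two houses have the same type when they have the same profile.

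The first step is to check that this is the right notion of house type for the rest of the section. The argument only ever uses the representing graph $G_\mathcal{I}$, and an edge $(a,h)$ is present exactly when the type $i$ of agent $a$ lies in $\tau(h)$. So two houses with the same profile have exactly the same neighbourhood in $G_\mathcal{I}$. Each agent values them identically, which means they are interchangeable in any allocation, in any envy graph, and hence in any subsidy computation. Therefore the profile captures everything about a house that matters here.

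The second step is the count itself. The map $\tau : H\to 2^{[k]}$ takes values in the power set of $[k]$, which has $2^k$ elements. Hence the houses fall into at most $2^k$ equivalence classes. If houses valued by no agent are set aside separately, as interchangeable ``dummy'' houses, this gives $2^k-1$ nonempty classes, which is still within the bound.

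There is no real obstacle here. The only point needing care is to state explicitly that the house type is defined by the valuation profile over agent types. With that definition, the bound $2^k$ follows immediately, and it is this bound that later allows the capacitated-matching construction to work with $k$ agent-type nodes and at most $2^k$ house-type nodes.
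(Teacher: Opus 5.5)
Your proposal is correct and matches the paper's argument. The paper likewise identifies a house's type with the binary vector $(v_1(h),\ldots,v_k(h))\in\{0,1\}^k$ recording which agent types value it at $1$, and concludes that there are at most $2^k$ such vectors; your extra justification that houses with identical profiles are interchangeable is a reasonable addition but not needed for the count.
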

\begin{proof}
The type of a house can be represented by a $k$-dimensional binary vector $\mathbf{c} \in \{0,1\}^k$, where the $i$-th entry equals $1$ if agents in $N_i$ have a value of $1$ to the house, and 
$0$ otherwise.
Thus, the total number of house types is $2^k$.
\end{proof}

Suppose that the house set is partitioned into $H_1 \cup H_2 \cup \cdots \cup H_{2^k}$, where houses in $H_j$ have the same house type.
Note that the agent set is partitioned into $N_1 \cup N_2 \cup \cdots \cup N_k$, where agents in $N_i$ have the same binary valuation function.
Thus, we can construct a new bipartite graph $G^*=(U \cup V,E)$ for the instance with $k$ types of agents, where there are $k$ vertices in $U$ on the left side of the graph representing the $k$ types of agents; there are $2^{k}$ vertices in $V$ on the right side of the graph representing the $2^{k}$ types of houses.
We put an edge $ e\in E$ between a type of agent and a type of house if the agents value the houses at $1$.
Moreover, the capacity of each agent type (resp.\ house type) is defined as the number of agents (resp.\ houses) of that type.
The capacity of each agent node ensures that agents of the corresponding type are matched exactly as many times as there are agents of that type. 
Similarly, the capacity of each house node ensures that no more houses of that type are assigned than are available.

\paragraph{The Simplified Algorithm.} Our algorithm begins with a given partition $(N_{S}, \hat{N}_{L}, N_{L}\setminus\hat{N}_{L})$  and the new bipartite graph $G^*=(U \cup V,E)$. 
We perform the update operations on $G^*$. 
From conditions (1) and (2) of Lemma~\ref{lemma:min-subsidy-conditions}, any optimal allocation assigns all houses in $H_S$ to agents in $N_S$, leaving $t$ unmatched agents. 
Accordingly, we remove $H_S$ and $N_S$ from $G^*$ and construct a new agent type vertex representing the $t$ unmatched agents.
Since each agent in $\hat{N}_L$ and the $t$ unmatched agents must receive a subsidy of $1$, houses valued at $1$ by agents in $N_L\setminus \hat{N}_L$ (denoted by $H^*$) cannot be assigned to them.
Otherwise, an agent in $N_L\setminus \hat{N}_L$ would envy an agent receiving both a house of value $1$ and a subsidy of $1$, while obtaining only a house of value $1$ herself.
Therefore, we delete all edges between $\hat{N}_L$ and $H^*$, as well as between the $t$ unmatched agents and $H^*$.
Finally, in the resulting graph $G^*$, if there exists a matching that assigns a house to every agent in $N_L$ and assigns each of the $t$ unmatched agents a house outside $H^*$, then there exists an allocation satisfying the conditions of Lemma~\ref{lemma:min-subsidy-conditions}, and the partition is feasible.
A detailed description of the algorithm, along with its analysis, is provided in Appendix~\ref{appendix-section:k-types}.
Based on the detailed analysis, we obtain the following main theorem of this section.

 \begin{theorem}
 \label{theorem:k-types}
Given a binary instance $\mathcal{I} = (N, H, \mathbf{v})$, if there are $k$ types of agents, then an EFable allocation with minimum total subsidy can be found in $O((k \cdot 2^{2k})^{(1+o(1))}) +O(n + m) $ time.
\end{theorem}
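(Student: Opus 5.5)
The proof follows the algorithmic outline given before the statement: correctness comes from Lemma~\ref{lemma:min-subsidy-conditions}, and the running time from counting partitions and solving one small capacitated matching per partition. The minimum subsidy is first compared with zero. Whether an EF allocation with zero subsidy exists can be decided in polynomial time~\cite{gan2019envy}; for binary instances with $k$ types this test can also be phrased as a type-level flow. If the minimum is positive, Lemma~\ref{subsidy-is-0} lets us restrict attention to maximum-cardinality matchings in $G_\mathcal{I}$, and Lemma~\ref{lemma:min-subsidy-conditions} then applies.

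\textbf{Preprocessing.} A single $O(n+m)$ pass (treating $k$ as a constant in the input size) does the following:
\begin{itemize}
\item classifies agents into the $k$ types and houses into at most $2^k$ types (Observation~\ref{observation:types-of-agents});
\item records the multiplicities, which become the capacities in $G^*$;
\item computes the EFM partition $N=N_S\cup N_L$, $H=H_S\cup H_L$ of Theorem~\ref{theorem-EFM-partition}.
\end{itemize}
The EFM partition can be computed on the compressed type graph $G^*$ rather than on $G_\mathcal{I}$, since membership in $N_S$ is type-invariant by condition~(4). This gives $t=|N_S|-|H_S|$. All later work takes place on $G^*$, which has $O(2^k)$ vertices and $O(k2^k)$ edges.

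\textbf{Enumeration and feasibility.} By Observation~\ref{observation:partitions-number}, we enumerate the at most $2^k$ choices of $\hat N_L$ as a union of types. For each choice we build the modified graph:
\begin{itemize}
\item delete $N_S$ and $H_S$;
\item add a vertex of capacity $t$ for the unmatched $N_S$-agents, adjacent exactly to the house types outside $H^*$;
\item delete the edges from $\hat N_L$-types to $H^*$.
\end{itemize}
Feasibility means that some flow saturates every agent-type capacity. We decide this with one max-flow on a network with $O(k2^k)$ edges and integer capacities. An almost-linear-time max-flow algorithm takes $O((k2^k)^{1+o(1)})$ time per call, so over $2^k$ calls the total is $O((k2^{2k})^{1+o(1)})$. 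The answer is the minimum of $t+|\hat N_L|$ over feasible partitions. An integral flow is then expanded back into an explicit allocation in $O(n+m)$ time by handing out concrete agents and houses of each type.

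\textbf{Correctness, in two directions.} The soundness direction is the main obstacle. We must show that a feasible flow for a partition yields an EFable allocation whose envy-eliminating subsidies total exactly $t+|\hat N_L|$.
\begin{itemize}
\item \emph{Soundness.} Set $p=1$ on the $t$ unmatched agents and on $\hat N_L$, and $p=0$ elsewhere, and check every pair.
  \begin{itemize}
  \item Agents in $N_L\setminus\hat N_L$ have utility $1$. They value the houses of subsidized agents at $0$ by the edge deletions, and a house of value $1$ given to an unsubsidized agent gives no gain.
  \item Subsidized agents have utility $2$ (or $1+0$ for the unmatched agents), which is at least $1$ plus any other agent's subsidy.
  \item Matched $N_S$-agents are the delicate case. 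They value no $H_L$ house by condition~(a), but they may value $H_S$ houses held by other $N_S$ agents, and unmatched $N_S$ agents receive subsidy. So either all of $N_S$ must receive subsidy~$1$, or the enumeration must treat the type of each matched $N_S$ agent carefully.
  \end{itemize}
  I would resolve this by showing that within $N_S$ every agent ends with subsidy $1$ whenever $t>0$, because paths in the envy graph connect each of them to an unmatched agent through the $Y_S$-saturating structure (condition~(b)). The resulting count is then folded into the partition cost as given by Lemma~\ref{lemma:min-subsidy-conditions}.
\item \emph{Completeness.} The optimal allocation $\mathbf{A}^*$ induces one of the enumerated partitions and a feasible type-level flow, so the minimum found is at most the optimum.
\end{itemize}
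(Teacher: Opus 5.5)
Your plan follows the paper's route: enumerate the at most $2^k$ type-level choices of $\hat N_L$, solve one capacitated matching on the type graph for each, and output the minimum of $t+|\hat N_L|$. Your zero-subsidy pre-check is a worthwhile addition. Lemma~\ref{subsidy-is-0}, and hence Lemma~\ref{lemma:min-subsidy-conditions}, only applies when the optimum is positive. On the example at the end of Section~\ref{section:Preliminaries}, restricting to maximum-cardinality matchings would report $t=n-1$ instead of $0$.

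\textbf{The gap.} It lies in the soundness step you flag as the main obstacle. Your resolution, that every agent of $N_S$ ends with subsidy $1$ whenever $t>0$, is false; the paper's proof of Part~(3) of Lemma~\ref{lemma:min-subsidy-conditions} shows the opposite.

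\emph{The correct check.} In the allocation built from a feasible flow, the houses of $H_S$ go exactly to the matched $N_S$-agents. Every subsidized agent (the $t$ unmatched ones and those in $\hat N_L$) therefore holds a house of $H_L$, which every $N_S$-agent values at $0$ by condition~(a). A matched $N_S$-agent has utility $1$. She values each subsidized bundle at $0+1$ and each unsubsidized bundle at most $1$, so she needs no subsidy.

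\emph{Why the path argument fails.} Every $+1$ edge starts at an unmatched agent and ends at a matched $N_S$-agent, since these are the only holders of houses an $N_S$-agent likes. Every edge leaving the set of matched $N_S$-agents has weight $-1$. So along any path from a matched $N_S$-agent, each $+1$ is preceded by its own $-1$, and $\ell(i)=0$.

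\emph{The same issue in your check of unmatched agents.} Their utility is $0+1$, not ``at least $1$ plus any other agent's subsidy.'' They avoid envying subsidized agents only because those agents hold $H_L$ houses.

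\textbf{Why your fix fails.} It is not merely wasteful; it breaks the argument.
\begin{enumerate}
\item It makes the cost $|N_S|+|\hat N_L|$, which is inconsistent with the value $t+|\hat N_L|$ you output.
\item It destroys envy-freeness. Uniform subsidy $1$ on $N_S$ makes every unmatched agent envy the holders of the $H_S$ houses she likes, since she sees $1+1$ against her own $1$.
\end{enumerate}
Concretely, let $a_1,a_2$ be of one type valuing only $h_1$, let $a_3$ value $h_1$ and $h_2$, and let $h_3$ be valued by nobody. Then $N_S=\{a_1,a_2\}$, $H_S=\{h_1\}$ and $t=1$. The allocation $a_1\gets h_1$, $a_2\gets h_3$, $a_3\gets h_2$ with subsidy $1$ to $a_2$ alone is envy-free with total $1=t$. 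Giving $a_1$ subsidy $1$ as well makes both $a_2$ and $a_3$ value $a_1$'s bundle at $2$ against their own $1$.

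Replace your resolution by the direct pairwise check above. The flow then yields an envy-free outcome with total exactly $t+|\hat N_L|$. Together with your completeness direction (the optimum pays $1$ for each unmatched agent and each agent of its own $\hat N_L$, and its partition passes the flow test), the argument closes.
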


\section{Two Types of Agents with General Utilities}
\label{section:two-types}
In this section, we present a polynomial time algorithm for the case with two types of agents, where agents of the same type have the same utility function.
The set of agents is partitioned into $N_x$ (agents of type-$x$) and $N_y$ (agents of type-$y$), where all agents in $N_x$ have utility function $v_x$; all agents in $N_y$ have utility function $v_y$.
Let $n_x = |N_x|$ and $n_y = |N_y|$ be the number of type-$x$ and type-$y$ agents, respectively.
We have $n_x + n_y = n \leq m$.

By condition~(3) of Theorem~\ref{theorem:SAGT2019}, an allocation is EFable if and only if the corresponding envy graph contains no positive-weight cycles. 
Note that the weight of any cycle formed entirely by agents of the same type is zero, since all agents of the same type have identical utility functions. 
Therefore, to check for positive-weight cycles, it suffices to consider only cycles involving agents of different types. 
Furthermore, in such cross-type cycles, the weight of each edge equals the difference in valuations between the two types, which is fully captured by the differences in house values. 
Therefore, we define $d(h) = v_x(h) - v_y(h)$ for each house $h \in H$, and sort the houses in descending order of $d(h)$, reindexing them accordingly such that  
$d(h_1) \;\geq\; d(h_2) \;\geq\; \cdots \;\geq\; d(h_m),$ 
and let $\mathcal{D} := \{d(h_1), d(h_2), \ldots, d(h_m)\}$.

\begin{example}
    Consider the following instance with two agents with type-$x$ and two agents with type-$y$ and $6$ houses. The agents’ utility functions are shown in Table~\ref{table:two-types}.

    \begin{table}[h]
     \centering
    \begin{tabular}{ccccccc}\toprule
		 & $h_1$ & $h_2$ & $h_3$ & $h_4$ & $h_5$ & $h_6$  \\ 
        \midrule
		$v_x$ & $6$ & $3$ & $2$  & $3$ & $1$ & $2$  \\
		$v_y$ & $2$ & $1$ &  $1$  & $2$ & $1$ & $5$ \\
           $d(h)$ & $4$ & $2$ &   $1$ & $1$ & $0$ & $-3$ \\
		 \bottomrule
	\end{tabular}
    \caption{An example of two types of agents.}
    \label{table:two-types}
\end{table}

It is easy to verify that the allocation $\mathbf{A} = (A_x = \{h_2, h_4\}, A_y = \{h_3, h_5\})$ is EFable and requires a total subsidy of $2$, obtained by giving a subsidy of $1$ to each agent in $N_y$.
Another allocation $\mathbf{A'} = (A_x = \{h_1, h_2\}, A_y = \{h_5, h_6\})$ is also EFable.
We observe that in allocation $\mathbf{A}$, every house assigned to type-$x$ agents satisfies $d(h) \ge 1$, while every house assigned to type-$y$ agents satisfies $d(h) \le 1$.
Similarly, in allocation $\mathbf{A}'$, houses assigned to type-$x$ and type-$y$ agents satisfy $d(h) \ge 2$ and $d(h) < 2$, respectively.
\end{example}




Based on the above observations, we obtain the following lemma (with full proof provided in Appendix~\ref{appendix-section:two-types}).

\begin{lemma}
\label{lemma:threshold-d}
An allocation $\mathbf{A} = (A_x, A_y)$, with $A_x$ and $A_y$ denoting the sets of houses allocated to type-$x$ and type-$y$ agents, is EFable if and only if there exists a threshold $\theta$ satisfying:
\begin{equation}
\label{eq:threshold}
    \forall h \in A_x, \; d(h) \geq \theta 
\quad \text{and} \quad
\forall h \in A_y, \; d(h) \leq \theta.
\end{equation}

\end{lemma}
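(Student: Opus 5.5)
We prove both directions through condition~(3) of Theorem~\ref{theorem:SAGT2019}: $\mathbf{A}$ is EFable if and only if $G_{\mathbf{A}}$ has no positive-weight cycle. As noted before the lemma, cycles that stay inside one type have weight zero. So everything reduces to computing the weight of cycles that alternate between types, and comparing it with the threshold condition~\eqref{eq:threshold}.

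\emph{Key computation.} Take any $i\in N_x$ holding house $a$ and any $j\in N_y$ holding house $b$. The two-cycle $(i,j,i)$ has weight
\[
w(i,j)+w(j,i)=v_x(b)-v_x(a)+v_y(a)-v_y(b)=d(b)-d(a).
\]
More generally, consider any cycle in $G_{\mathbf{A}}$. Its consecutive edges within one type telescope, since they share the same utility function. Grouping the edges into maximal same-type runs, the total weight is a sum of terms of the form $d(b)-d(a)$ with $a\in A_x$, $b\in A_y$. Concretely, on each maximal run inside $N_x$ the weights telescope to $v_x(\text{house of the exit agent})-v_x(\text{house of the entry agent})$, and the crossing edges contribute $v_x$ or $v_y$ values. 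Pairing the $v_x$ and $v_y$ values of each house visited at a type switch gives
\[
w(\text{cycle})=\sum_{r}\bigl(d(b_r)-d(a_r)\bigr),
\]
where each switch from $N_x$ into $N_y$ contributes a house $b_r\in A_y$ and each switch back contributes a house $a_r\in A_x$.

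\emph{(If).} Suppose a threshold $\theta$ exists. Then every house $b\in A_y$ has $d(b)\le\theta$ and every house $a\in A_x$ has $d(a)\ge\theta$, so each term $d(b_r)-d(a_r)\le 0$. Hence every cycle has nonpositive weight, and $\mathbf{A}$ is EFable by Theorem~\ref{theorem:SAGT2019}.

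\emph{(Only if).} Suppose $\mathbf{A}$ is EFable. Then every two-cycle has nonpositive weight, so $d(b)\le d(a)$ for all $a\in A_x$ and $b\in A_y$. In other words, $\max_{b\in A_y}d(b)\le\min_{a\in A_x}d(a)$. Take $\theta=\min_{a\in A_x}d(a)$, or any value if $A_x$ is empty, and similarly handle $A_y=\emptyset$. This $\theta$ satisfies~\eqref{eq:threshold}. This direction needs only two-cycles, so it is immediate.

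\emph{Main obstacle.} The only delicate step is the bookkeeping for a general cycle in the (if) direction: writing its weight as a sum of $d(b)-d(a)$ terms. An alternative that avoids it is the permutation maximality in condition~(2) of Theorem~\ref{theorem:SAGT2019}. Any permutation of the bundles exchanges some equal-size sets $S\subseteq A_x$ and $T\subseteq A_y$ between the types. The welfare change is
\[
\sum_{b\in T}d(b)-\sum_{a\in S}d(a),
\]
because $v_x(b)+v_y(a)-v_x(a)-v_y(b)=d(b)-d(a)$ after pairing the elements of $S$ and $T$ arbitrarily. Each paired term is $\le 0$ under the threshold, so no permutation increases welfare. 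I would present this argument, since it is cleaner.
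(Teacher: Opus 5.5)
Your proof is correct and matches the paper's: the ``only if'' direction uses the same two-cycle $(i,j,i)$ of weight $d(b)-d(a)$, and the ``if'' direction uses permutation maximality from Theorem~\ref{theorem:SAGT2019}. Your exchange-set computation with $S\subseteq A_x$, $T\subseteq A_y$, $|S|=|T|$, is more careful than the paper, which checks only single swaps and asserts without justification that this suffices. Your pairing argument supplies that missing step, and your cycle-telescoping bookkeeping is also correct and would do the same job.
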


\begin{observation}
For any EF outcome, all agents of the same type receive an identical final utility, consisting of the values of their assigned houses and subsidies that ensure envy-freeness.
\end{observation}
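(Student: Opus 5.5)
The observation concerns an EF outcome $(\mathbf{A},\mathbf{P})$ with two types of agents. I will show that any two agents of the same type end up with equal final utility $v(A_i)+p_i$. The plan is to apply the EF inequality of Definition~\ref{definition:EF Outcome} in both directions between two same-type agents.

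Fix a type, say type-$x$, and take any two agents $i,j\in N_x$. Both use the common utility function $v_x$.

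First, envy-freeness for $i$ toward $j$ gives
\[
v_x(A_i)+p_i \ge v_x(A_j)+p_j .
\]
Second, envy-freeness for $j$ toward $i$ gives
\[
v_x(A_j)+p_j \ge v_x(A_i)+p_i .
\]
The right-hand side of each inequality is evaluated with $v_x$, which is exactly the utility function the other agent uses to value her own house. So the two inequalities compare the same two quantities in opposite directions, and together they force
\[
v_x(A_i)+p_i = v_x(A_j)+p_j .
\]

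The same argument applies verbatim to type-$y$ with $v_y$. No earlier result is needed beyond the definition of an EF outcome. The only point requiring care is that the statement concerns final utilities, not houses or subsidies separately. Agents of the same type may hold houses of different value, and the subsidies then compensate exactly for the difference; the argument above shows this without any further work, so there is no real obstacle.

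This observation will later be used, together with Lemma~\ref{lemma:threshold-d}, to describe a minimum-subsidy outcome by just two utility levels, one for each type.
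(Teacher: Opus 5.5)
Your proof is correct and is the same argument the paper relies on. The paper states this observation without a separate proof, as an immediate consequence of Definition~\ref{definition:EF Outcome}. Applying that EF inequality in both directions between two agents sharing $v_x$ (or $v_y$) forces their final utilities to be equal. Nothing further is needed, and the argument works for any number of types.
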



As shown in Example~\ref{table:two-types}, we observe that $h_1$ is the most valuable house for type-$x$ agents. 
However, if $h_1$ is assigned to one agent of type-$x$, the other agent in $N_x$ can receive at most a house of value $3$, resulting in a total subsidy of at least $3$. 
Similarly, if $h_6$ is assigned to one agent of type-$y$, the other agent in $N_y$ can receive at most a house of value $2$, also leading to a total subsidy of at least $3$.
This example illustrates that allocating the $n$ houses with the highest values is not always optimal. Indeed, for a given threshold $\theta$, the total subsidy depends critically on the choice of the maximum-valued house assigned to type-$x$ agents and to type-$y$ agents, as formalized in the following lemma (see Appendix~\ref{appendix-section:two-types} for the full proof).

\begin{lemma}
\label{lemma:compute-subsidy}
Given an EFable allocation $\mathbf{A} = (A_x, A_y)$, the corresponding total subsidy $P$ can be represented by a quadruple $(\theta_x,\theta_y, M_x, M_y)$ as follows: 
 \begin{align*}
  \label{eq:total-subsidy-in-lemma}
   \scalebox{0.93}{$
     \begin{aligned} 
P=
    \begin{cases}
        n \cdot M_y +n_x \cdot \theta_{y}- \sum_{i \in N} v_i(A_i) & \text{if }  M_x -M_y < \theta_{y}; \\
        n \cdot M_x-n_y \cdot \theta_{x}-\sum_{i \in N} v_i(A_i) & \text{if } M_x -M_y > \theta_{x};\\
         n_x \cdot M_x+n_y \cdot M_y -\sum_{i \in N} v_i(A_i) & \text{otherwise.}    
    \end{cases}
   \end{aligned}
$}
   \end{align*}
where $M_x := \max_{h \in A_x} v_x(h)$,
$M_y := \max_{h \in A_y} v_y(h)$, and
$\theta_y := \max_{j \in N_y} d(A_j) \in \mathcal{D}$,
$\theta_x := \min_{i \in N_x} d(A_i) \in \mathcal{D}$ with $\theta_x \ge \theta_y$.

\end{lemma}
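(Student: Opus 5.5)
The approach is to reduce the minimum total subsidy for a fixed EFable allocation to a two-variable linear program in the common final utilities of the two types, and then solve that program by cases. Throughout, I fix the EFable allocation $\mathbf{A}=(A_x,A_y)$ and define $M_x, M_y, \theta_x, \theta_y$ as in the statement. By Lemma~\ref{lemma:threshold-d}, or directly from the no-positive-cycle condition of Theorem~\ref{theorem:SAGT2019}, we have $\theta_x\ge\theta_y$.

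First I use the Observation that, in any EF outcome $(\mathbf{A},\mathbf{P})$, two agents of the same type must have equal final utility. Two same-type agents cannot envy each other in either direction, and they share a valuation function, so their final utilities coincide. I therefore write $u_x$ and $u_y$ for the common final utilities, so that $p_i=u_x-v_x(A_i)$ for $i\in N_x$ and $p_j=u_y-v_y(A_j)$ for $j\in N_y$. The total subsidy is then $n_xu_x+n_yu_y-\sum_{i\in N}v_i(A_i)$, and minimizing it amounts to minimizing $n_xu_x+n_yu_y$.

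Next I translate envy-freeness and non-negativity into constraints on $(u_x,u_y)$.
\begin{itemize}
\item Non-negativity of all subsidies is equivalent to $u_x\ge M_x$ and $u_y\ge M_y$.
\item Within a type, EF holds automatically once the final utilities are equal.
\item A type-$x$ agent does not envy a type-$y$ agent $j$ iff $u_x\ge v_x(A_j)+u_y-v_y(A_j)=u_y+d(A_j)$. Over all $j\in N_y$ this becomes $u_x-u_y\ge\theta_y$.
\item A type-$y$ agent does not envy a type-$x$ agent $i$ iff $u_y\ge u_x-d(A_i)$. Over all $i\in N_x$ this becomes $u_x-u_y\le\theta_x$.
\end{itemize}
Hence the envy-eliminating subsidy vectors correspond exactly to the points of the region
\[
R=\{(u_x,u_y) : u_x\ge M_x,\ u_y\ge M_y,\ \theta_y\le u_x-u_y\le\theta_x\}.
\]
This region is nonempty: since $\theta_x\ge\theta_y$, one can take $u_y$ large and $u_x=u_y+\theta_y$. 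The total subsidy is $P=\min_{R}(n_xu_x+n_yu_y)-\sum_{i\in N}v_i(A_i)$. As a sanity check, this agrees with Theorem~\ref{theorem:min-subsidy-ell(i)}, because the coordinatewise minimal vector $\mathbf{P}^*$ lies in this region.

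Finally I solve the program in three cases according to $M_x-M_y$.
\begin{itemize}
\item If $\theta_y\le M_x-M_y\le\theta_x$, the point $(M_x,M_y)$ is feasible and is coordinatewise minimal. This gives the third line of the formula.
\item If $M_x-M_y<\theta_y$, every feasible point satisfies $u_x\ge u_y+\theta_y$ and $u_y\ge M_y$. Therefore $n_xu_x+n_yu_y\ge n_x(u_y+\theta_y)+n_yu_y\ge nM_y+n_x\theta_y$. The point $(M_y+\theta_y,M_y)$ is feasible: it satisfies $u_x=M_y+\theta_y>M_x$, and the difference $\theta_y$ does not exceed $\theta_x$. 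It attains the bound, giving the first line.
\item If $M_x-M_y>\theta_x$, the symmetric argument applies. The constraints $u_y\ge u_x-\theta_x$ and $u_x\ge M_x$ give the lower bound $nM_x-n_y\theta_x$. This bound is attained at $(M_x,M_x-\theta_x)$, which satisfies $M_x-\theta_x>M_y$, giving the second line.
\end{itemize}
These three cases are exhaustive and mutually exclusive because $\theta_y\le\theta_x$.

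The main obstacle is the reduction to the region $R$, in particular justifying that EF plus non-negativity is captured exactly by these four constraints. The delicate point is the cross-type envy constraint: one must substitute $p_j=u_y-v_y(A_j)$ so that the house-specific terms collapse to $d(A_j)$, and then take the extremum over agents to obtain $\theta_y$ and $\theta_x$. Once this is in place, the three cases are routine linear-programming comparisons.
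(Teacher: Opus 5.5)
Your proof is correct, but it takes a different route from the paper's.

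\textbf{The paper's route.} The paper stays inside the Halpern--Shah envy-graph framework. It takes $i^*\in\arg\max_{i\in N_x}v_x(A_i)$ and $j^*\in\arg\max_{j\in N_y}v_y(A_j)$ and splits into the cases $\ell(i^*)>0$, $\ell(j^*)>0$, or neither. It writes down the resulting vector $(\ell(i))_{i\in N}$ and evaluates $\ell(i^*)=\ell(i^*,j^*)=M_y-M_x+\theta_y$ (resp.\ $\ell(j^*)=M_x-M_y-\theta_x$) by maximizing over two-edge paths. Minimality then comes from Theorem~\ref{theorem:min-subsidy-ell(i)}.

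\textbf{Your route.} You use the equal-final-utility observation to identify $\mathcal{P}(\mathbf{A})$ exactly with the planar region $R$. You then solve a two-variable linear program by cases.

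\textbf{What your route buys.} It is more elementary and self-contained. It needs no longest-path computations, so you never have to argue that longer paths cannot beat the two-edge ones, which the paper leaves implicit. It also produces the case conditions $M_x-M_y<\theta_y$ and $M_x-M_y>\theta_x$ exactly as they appear in the statement. The paper instead phrases its cases as $\ell(i^*)>0$ and $\ell(j^*)>0$ and leaves the translation to the reader. That translation follows from $\ell(i^*)=\max\{0,M_y-M_x+\theta_y\}$ and $\ell(j^*)=\max\{0,M_x-M_y-\theta_x\}$.

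\textbf{What the paper's route buys.} It makes explicit that the optimum is the coordinatewise-minimal vector $\mathbf{P}^*$. Your argument recovers this too, since in each case the point you exhibit is coordinatewise minimal in $R$. For instance, when $M_x-M_y<\theta_y$, every feasible point has $u_y\ge M_y$ and $u_x\ge u_y+\theta_y\ge M_y+\theta_y$.
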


It follows that, to find an allocation with minimum subsidy, it suffices to determine thresholds $(\theta_x,\theta_y)$ with $\theta_x \ge \theta_y$ such that the allocation satisfies $\theta = \theta_x$, which is EFable by Lemma~\ref{lemma:threshold-d}.
For fixed $M_x$ and $M_y$ satisfying the threshold condition, the minimum total subsidy is achieved by maximizing $\sum_{i \in N} v_i(A_i)$ over all feasible allocations.
Hence, the subsidy is minimized when the values of houses assigned to type-$x$ (resp.\ type-$y$) agents are as close as possible to $M_x$ (resp.\ $M_y$), which coincides with the total subsidy expression in Lemma~\ref{lemma:compute-subsidy}.
Based on this analysis, we develop Algorithm~\ref{Alg-type-min-subsidy}.

\begin{algorithm}[t]
\caption{Minimum Subsidy for Two Types of Agents}
\label{Alg-type-min-subsidy}
\textbf{Input:} An instance with two types of agents $\mathcal{I}=(N_x \cup N_y,H,(v_x,v_y))$. \\ 
\textbf{Output:} An EFable allocation $\mathbf{A}$ and total subsidy $P$. 

\begin{algorithmic}[1]
\STATE Define  $\mathcal{D} := \{d(h_1), \ldots, d(h_m)\}$, $\mathcal{V}_x:=\{v_x(h) \mid h \in H\}$, $\mathcal{V}_y:=\{v_y(h) \mid h \in H\}$, $P \gets \infty$, $\mathbf{A} \gets \emptyset$;
    \FORALL{$\theta_x \in \mathcal{D}$}
    \FORALL{$\theta_y \in \mathcal{D}, \theta_y \le \theta_x$}
        \FORALL{$M_x \in \mathcal{V}_x$}
            \FORALL{$M_y \in \mathcal{V}_y$}
                \STATE $T_x:= \{h \in H \mid d(h) \geq \theta_x \land v_x(h) \le M_x\}$; 
                \STATE $T_y:=\{h \in H \mid d(h) \leq \theta_y \land v_y(h) \le M_y\}$;
                \STATE Construct a bipartite graph $G=(U \cup V, E)$ according to $T_x$ and $T_y$;  
                 \IF{there exists a maximum-weight perfect matching $\mathcal{M}$ in $G$}
                  \STATE Let $\mathbf{A}$ be the allocation induced by $\mathcal{M}$;   
                    \STATE Compute $P$; if it is smaller than the current best, update $P$;
                \ENDIF
            \ENDFOR
       \ENDFOR
   \ENDFOR
\ENDFOR
    \STATE \textbf{Return} An EFable allocation $\mathbf{A}$ and total subsidy $P$.
\end{algorithmic}
\end{algorithm}

\paragraph{The Algorithm.} 
The algorithm begins by constructing the following three finite sets:
\[
\mathcal{D} := \{d(h_1), d(h_2), \ldots, d(h_m)\},
\]
and   
\begin{align*}
\mathcal{V}_x:=\{v_x(h) \mid h \in H\},\  \mathcal{V}_y:=\{v_y(h) \mid h \in H\},
\end{align*}
which form the basis of potential optimal allocation.
Then we employ a quadruple nested loop to iterate through every possible parameter combination 
\begin{align*}
(\theta_x,\theta_y, M_x, M_y) \in \mathcal{D} \times \mathcal{D} \times \mathcal{V}_x \times \mathcal{V}_y. 
\end{align*}
Within each iteration, we first define the candidate sets of houses for each type of agents: 
\begin{align*}
T_x:= \{h \in H \mid d(h) \geq \theta_x \land v_x(h) \le M_x\}, \\
T_y:=\{h \in H \mid d(h) \leq \theta_y \land v_y(h) \le M_y\}.
\end{align*}
Then the algorithm finds the best possible allocation by solving a maximum-weight perfect matching problem. 
Construct a bipartite graph $G=(U \cup V, E)$, where $U$ denotes the set of agents; $V := T_x \cup T_y$ is the set of house candidates; Edges $e\in E$ and weights are defined as: For each agent $i \in N_x$ and house $h \in T_x$, add edge $(i, h)$ with weight $v_x(h)$; For each agent $j \in N_y$ and house $h \in T_y$, add edge $(j, h)$ with weight $v_y(h)$.
We then try to find a maximum-weight perfect matching $\mathcal{M}$ in $G$.
If the perfect matching exists, let $\mathbf{A}$ be the allocation induced by $\mathcal{M}$, and compute the corresponding total subsidy $P$. 
Then update the total subsidy $P$ if the total subsidy is smaller than the previous one.
After all parameter quadruplets have been tested, the algorithm returns the optimal allocation $\mathbf{A}$ and its corresponding total subsidy $P$.

We summarize the details of the algorithm in Algorithm~\ref{Alg-type-min-subsidy}, and obtain the following main theorem (with full proof provided in Appendix~\ref{appendix-section:two-types}).

\begin{theorem} 
\label{theorem:type-min-subsidy-EF}
For any instance $\mathcal{I} = (N_x \cup N_y, H, (v_x, v_y))$ with two types of agents, Algorithm~\ref{Alg-type-min-subsidy} runs in polynomial time and returns an EFable allocation $\mathbf{A}$ along with a subsidy $P$ minimizing the total subsidy among all EFable allocations.
\end{theorem}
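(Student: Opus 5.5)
We prove two things: the running time is polynomial, and the returned outcome is optimal. The running time is immediate. The four nested loops range over $\mathcal{D}\times\mathcal{D}\times\mathcal{V}_x\times\mathcal{V}_y$, which has at most $m^4$ quadruples. For each quadruple, building $T_x$, $T_y$ and $G$ takes $O(nm)$ time. A maximum-weight $U$-saturating matching (what the algorithm calls a "perfect" matching) can be found by the Hungarian algorithm in $O(n^2 m)$ time. Evaluating $P$ via Lemma~\ref{lemma:compute-subsidy} is linear. The total is therefore $O(m^4\cdot n^2 m)$, which is polynomial.

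Next, soundness: every allocation the algorithm evaluates is EFable, and the recorded $P$ is achievable. Take any matching $\mathcal{M}$ found for a quadruple $(\theta_x,\theta_y,M_x,M_y)$. Every house given to a type-$x$ agent satisfies $d(h)\ge\theta_x$, and every house given to a type-$y$ agent satisfies $d(h)\le\theta_y\le\theta_x$. So the threshold $\theta=\theta_x$ witnesses condition~\eqref{eq:threshold}, and Lemma~\ref{lemma:threshold-d} shows the allocation is EFable. Its true minimum subsidy is given by Theorem~\ref{theorem:min-subsidy-ell(i)}.

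The algorithm computes $P$ using the loop parameters, while Lemma~\ref{lemma:compute-subsidy} is stated with the actual extreme values $\theta_x',\theta_y',M_x',M_y'$ of the allocation. These satisfy $\theta_x'\ge\theta_x$, $\theta_y'\le\theta_y$, $M_x'\le M_x$ and $M_y'\le M_y$. We will handle this in one of two ways. The first option is to define "compute $P$" as evaluating Lemma~\ref{lemma:compute-subsidy} on the allocation's actual quadruple. The second is to show the formula is monotone, so that the loop-parameter value is an upper bound on the true subsidy. In either case the recorded value is attained by an EF outcome.

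The main step is optimality. Let $(\mathbf{A}^*,\mathbf{P}^*)$ be a minimum-subsidy EF outcome, and let $(\theta_x^*,\theta_y^*,M_x^*,M_y^*)$ be its quadruple from Lemma~\ref{lemma:compute-subsidy}. All four values lie in $\mathcal{D}$, $\mathcal{D}$, $\mathcal{V}_x$, $\mathcal{V}_y$ respectively, and $\theta_x^*\ge\theta_y^*$, so this quadruple is enumerated. For it, $A^*_x\subseteq T_x$ and $A^*_y\subseteq T_y$, so $\mathbf{A}^*$ is a feasible saturating matching in $G$ and the maximum-weight matching $\mathcal{M}$ exists.

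Let $\mathbf{A}$ be the allocation induced by $\mathcal{M}$. Its actual quadruple $(\theta_x',\theta_y',M_x',M_y')$ satisfies $\theta_x'\ge\theta_x^*$, $\theta_y'\le\theta_y^*$, $M_x'\le M_x^*$ and $M_y'\le M_y^*$. In addition, $\sum_i v_i(A_i)\ge\sum_i v_i(A_i^*)$, since $\mathcal{M}$ has maximum weight.

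We then show that the three-case expression $F(\theta_x,\theta_y,M_x,M_y)-\sum_i v_i(A_i)$ is nondecreasing in $M_x$ and $M_y$, nonincreasing in $\theta_x$, and nondecreasing in $\theta_y$. A cleaner way to see this is to write $P+\sum_i v_i(A_i)=n_xU_x+n_yU_y$, where $U_x,U_y$ are the minimal common final utilities of the two types. These are $U_x=\max(M_x,\,M_y+\theta_y)$ and $U_y=\max(M_y,\,M_x-\theta_x)$, which is exactly the case split in Lemma~\ref{lemma:compute-subsidy}. Each of these maxima is clearly monotone in the stated directions. Hence $P(\mathbf{A})\le F(\theta_x^*,\theta_y^*,M_x^*,M_y^*)-\sum_i v_i(A^*_i)=P^*$, and the algorithm's best value is at most the optimum.

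We expect the main obstacle to be verifying this monotone reformulation against the exact case conditions of Lemma~\ref{lemma:compute-subsidy}, including the boundary cases where $M_x-M_y$ equals $\theta_y$ or $\theta_x$. If the reformulation fails, we will instead argue directly through Theorem~\ref{theorem:min-subsidy-ell(i)}. The longest paths in $G_{\mathbf{A}}$ only involve the maximum-valued houses and the extreme cross-type differences. Tightening the thresholds and caps can therefore only shorten these paths, so the subsidy does not increase.
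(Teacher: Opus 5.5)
Your proposal is correct and follows the same route as the paper. Soundness comes from the threshold characterization of Lemma~\ref{lemma:threshold-d}. Optimality comes from the iteration whose quadruple equals that of an optimal allocation $\mathbf{A}^*$, where the maximum-weight matching has welfare at least $\sum_i v_i(A^*_i)$. The running time follows from an $O(m^4)$ bound on the number of iterations. Your rewriting $P+\sum_{i}v_i(A_i)=n_x\max(M_x,\,M_y+\theta_y)+n_y\max(M_y,\,M_x-\theta_x)$ agrees with all three cases of Lemma~\ref{lemma:compute-subsidy}, boundaries included, using $\theta_y\le\theta_x$. Its monotonicity handles the fact that the matching found may have strictly tighter actual parameters than the loop values; the paper's proof passes over this point by simply appealing to welfare maximization.
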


\section{Conclusion and Open Problems}
We focus on the problem of achieving envy-freeness in house allocation via monetary subsidies, with the goal of minimizing the total subsidy required. 
We show that finding an allocation with minimum total subsidy is NP-hard even for binary instances. 
Consequently, we present polynomial time algorithms for instances with a bounded number of agent types, as well as for two types of agents with general utilities.
Several directions remain open for future research. 
One natural direction is to extend our results to general utilities with more than two agent types.
Our approach for instances with two types of agents relies on pairwise threshold structures, which are insufficient to characterize EFable allocations for three or more types.
Other interesting directions include considering additional fairness notions with subsidies, such as proportionality or maximin share guarantees.


%
%
%
 \bibliographystyle{plain}
 \bibliography{HAP}
\clearpage
\appendix

\section{Missing Proof of Theorem~\ref{theorem:reduction}}
\label{appendix:Missing-Proof-of-NP-hard}
\begin{proof}[\textbf{Proof of Theorem~\ref{theorem:reduction}}]

We will argue that there exists a subset $T \subseteq [m]$ with $|T| = k$ and $\left| \bigcup_{t \in T} S_t \right| \le q$ if and only if there exists an envy-free outcome with total subsidy at most $k+q$ in the corresponding binary instance.

$(\Rightarrow)$ Let $T \subseteq [m]$ be a feasible solution to the decision version of \problem{Minimum $k$-Union}. 
We construct an envy-free outcome $(\mathbf{A,P})$ such that $\sum_{i \in N}p_i \leq k+q$ as follows:
\begin{itemize}
    \item Assign to each agent $a^*_e \in N_1$ the specific house $h_{a^*_e}$.
    \item Assign to each agent $a_t$ the house $h_t$ for all $t \in [m]$.
    \item Assign to each agent $a_t$ with $t \in [m+1, m+k]$ exactly one distinct house from $\{h^*_j: j\in T\}$.
\end{itemize}

Next, we construct a subsidy vector $\mathbf{P}$ as follows:
\begin{equation*}
    p_i:=
    \begin{cases}
        1 & \text{if agent } i \in N_2 \text{ and } i \text{ gets } h^*_j \in T; \\
        1 & \text{if agent } i = a^*_e \text{ for some } e \in \bigcup_{j \in T} S_j;\\
        0 & \text{otherwise.}
    \end{cases}
\end{equation*}
Since $|\bigcup_{j \in T} S_j|  \leq q$ and  $k$ agents get $h^*_j \in T$, the total subsidy is $\sum_{i \in N}p_i \leq k+q.$

We now show that the outcome  $(\mathbf{A,P})$ is envy-free. 


Each agent $i \in N_1$ believes that her own allocated house has a value of $1$ and the houses allocated to any other agent $i' \in N_1 \setminus \{i\}$ have a value of $0$.
Therefore, agent $i$ does not envy any other agent $i'$, even if agent $i'$ receives a subsidy of $1$.
For each agent $i \in N_1$, she does not envy any agent $i' \in N_2$ receiving a subsidy of $1$, provided that agent $i$ considers the house allocated to agent $i'$ has a value of $0$.
Thus, only when agent $i$ considers the house in $H_2$ and the corresponding element $e$ is covered by some $S_j$ in $T$, with $S_j$ being allocated, should they receive a subsidy of $1$ to eliminate envy. 
Finally, every agent in $\{ a^*_e \mid e \in \bigcup_{j \in T} S_j \}$ attains a total utility of $2$ after receiving the subsidy. 
Each such agent evaluates the combined value (house and subsidy) received by agents in $\{ a_t \mid t \in [m+1, m+k] \}$ as at most $2$, and considers the house and subsidy received by agents in $\{ a_t \mid t \in [m] \}$ to be $0$.
Therefore, each agent $i \in N_1$ does not envy any other agent.

Note that every agent $i$ in $\{ a_t \mid t \in [m] \}$ has a utility of $1$ for her assigned house. 
Then they will not envy any agent $i'$ in $N_1 \cup \{ a_t \mid t \in [m+1, m+k] \}$ who receives a subsidy of $1$, since agent $i$ values the houses allocated to agent $i'$ at $0$.  
Thus these agents do not envy any other agent.

Finally, every agent $i$ in $\{ a_t \mid t \in [m+1, m+k] \}$ has a utility of $1$ after receiving the subsidy. 
Each such agent $i$ believes that every agent in $\{ a_t \mid t \in [m] \}$ receives a house valued at $1$ (with zero subsidy), and that each agent in $N_1$ receives a house valued at $0$ (with at most a subsidy of $1$). 
Thus, these agents are envy-free with respect to all other agents.

Therefore, the outcome $(\mathbf{A}, \mathbf{P})$ is envy-free.

$(\Leftarrow)$ Let $(\mathbf{A,P})$ be an envy-free outcome with total subsidy at most $k+q$, that is, $\sum_{i \in N}p_i\leq k+q.$ 

We will construct $T$ such that $|T|=k$ and $|\bigcup_{j \in T} S_j| \leq q$.
Let the set $T \subseteq [m]$ be defined as $T:=\{j\in[m]: A_i=h_j^* \text{ for some } i\in N_2\}.$
We claim that $T$ satisfies $|T| \geq k$ and $|\bigcup_{j \in T} S_j| \leq q$.

We first establish several useful claims.
\begin{claim}
\label{claim:h_e-should-be-allocated}
    In allocation $\mathbf{A}$, whenever $h_{a^*_e}$ is allocated, it must be assigned to its corresponding agent $a^*_e \in N_1$.
\end{claim}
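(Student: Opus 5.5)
The plan is to argue by contradiction from EFability, using permutation maximality (condition (2) of Theorem~\ref{theorem:SAGT2019}). By construction, $h_{a^*_e}$ is valued $1$ by $a^*_e$ and $0$ by every other agent. So suppose $A_i=h_{a^*_e}$ for some agent $i\neq a^*_e$; then $v_i(A_i)=0$, and $i$ is unmatched in the sense of the paper. I would split on whether $a^*_e$ herself is matched.

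\emph{Case 1: $v_{a^*_e}(A_{a^*_e})=0$.} Swap the houses of $i$ and $a^*_e$. Agent $a^*_e$ gains exactly $1$, since she now holds $h_{a^*_e}$. Agent $i$ goes from value $0$ to $v_i(A_{a^*_e})\ge 0$, so she does not lose. The utilitarian welfare of the bundle permutation therefore strictly increases, which contradicts permutation maximality. Hence $\mathbf{A}$ would not be EFable, contradicting Theorem~\ref{theorem:SAGT2019}. This case is routine.

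\emph{Case 2: $v_{a^*_e}(A_{a^*_e})=1$.} Here $A_{a^*_e}=h^*_j$ for some $j$ with $e\in S_j$, and the plain swap is welfare-neutral whenever $v_i(h^*_j)=0$, so permutation maximality alone does not give a contradiction. The envy graph does not immediately help either: the edge $(a^*_e,i)$ has weight $0$, so EF only forces $p_{a^*_e}\ge p_i$. My plan is to follow the alternating structure instead. Since $i$ is unmatched, EF requires $p_i$ to be at least any other agent's subsidy minus $1$; by Lemma~\ref{subsidy-is-1} that forces $p_i=1$ as soon as anyone is subsidized. Then $p_{a^*_e}=1$ as well, and so does every agent $a^*_f$ with $f\in S_j$. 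I would try to show that moving $h_{a^*_e}$ to $a^*_e$ and $h^*_j$ to $i$ (or to an unused house) yields an EF outcome with total subsidy no larger than before.

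I expect Case~2 to be the main obstacle, because the conclusion is not forced by EF alone. If the pure contradiction fails, I would restate the claim in its intended ``without loss of generality'' form: among envy-free outcomes of total subsidy at most $k+q$, one may choose one in which each allocated $h_{a^*_e}$ belongs to $a^*_e$. I would prove this by the exchange just described, checking agent by agent, using Observation~\ref{observation:is-1}, that no new envy arises and that no subsidy must increase. That form is all the subsequent $(\Leftarrow)$ argument needs.
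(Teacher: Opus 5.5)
Your proposal follows the paper's proof. Permutation maximality rules out an unmatched $a^*_e$, and when $a^*_e$ holds some $h^*_j$ the paper likewise only argues that exchanging $h_{a^*_e}$ and $h^*_j$ does not increase the total subsidy. That is exactly the without-loss-of-generality form you propose, and you are right that the literal ``must'' is not forced by envy-freeness.

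One correction to your Case~2 detour. Envy-freeness between $i$ and $a^*_e$ directly gives $p_{a^*_e}\ge p_i\ge p_{a^*_e}+v_i(h^*_j)$, hence $p_i=p_{a^*_e}$ and $v_i(h^*_j)=0$. So swapping the two houses with subsidies kept fixed leaves the multiset of house--subsidy bundles unchanged and makes both agents indifferent. Your bound ``at least any other agent's subsidy minus $1$'' is too weak to give anything; for an unmatched $i$ one actually has $p_i\ge p_k$ for every $k$. Also, Lemma~\ref{subsidy-is-1} concerns the minimal vector $\ell$, not an arbitrary envy-eliminating $\mathbf{P}$, so it cannot force $p_i=1$.
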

\begin{proof}
    If $h_{a^*_e}$ is allocated, then agent $a^*_e$ must also be matched; otherwise, $\mathbf{A}$ would violate the permutation maximality condition stated in Theorem~\ref{theorem:SAGT2019}, and thus would not be EFable. 
    Therefore, there must exist some $h^*_j$ in allocation $\mathbf{A}$ such that $e \in S_j$. 
    Since $h_{a^*_e}$ has value $0$ for any agent in $N \setminus \{a^*_e\}$, exchanging $h_{a^*_e}$ with $h^*_j$ results in a new allocation that does not increase the total subsidy.
\end{proof}

\begin{claim}
\label{claim:N_2-get-H_2-and-H_3}
    For the agents in $N_2$, some agents are assigned houses from $H_2$, while others are assigned houses from $H_3$.
\end{claim}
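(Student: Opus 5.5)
We want to show that in the EF outcome $(\mathbf{A},\mathbf{P})$ with total subsidy at most $k+q$, the agents of $N_2$ split into those receiving houses from $H_3$ and those receiving houses from $H_2$, with no agent of $N_2$ holding a house from $H_1$. Up to the same kind of exchange used in Claim~\ref{claim:h_e-should-be-allocated}, we may assume this.

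\textbf{Step 1: some agents of $N_2$ get $H_3$, but not all can.} Since $|N_2|=m+k>m=|H_3|$, at least $k$ agents of $N_2$ cannot receive a house of $H_3$. Houses of $H_3$ are valued only by agents of $N_2$. So if some house of $H_3$ were left unallocated while an agent of $N_2$ is unmatched, we would have an augmenting path of length one, contradicting permutation maximality. Hence all of $H_3$ is allocated, and only to agents of $N_2$, since an agent of $N_1$ holding $h_t$ could swap with an unmatched agent of $N_2$ and increase welfare. Thus exactly $m$ agents of $N_2$ are matched to $H_3$, and the other $k$ agents of $N_2$ are unmatched, each with subsidy $1$.

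\textbf{Step 2: the unmatched agents of $N_2$ get houses of $H_2$.} This is the main obstacle. An unmatched agent $a\in N_2$ holds some house from $H_1\cup H_2$.

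\emph{Case $A_a=h_{a^*_e}\in H_1$.} By Claim~\ref{claim:h_e-should-be-allocated}, allocated houses of $H_1$ go to their own agents, after an exchange that does not increase the subsidy. To make this explicit: agent $a^*_e$ must be matched, hence holds some $h^*_j$, and we swap $h_{a^*_e}$ with $h^*_j$. Agent $a^*_e$ stays matched. Agent $a$ still values her house at $0$ and keeps subsidy $1$. Only agents $a^*_{e'}$ with $e'\in S_j$ may now envy $a$.

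\emph{Subsidy check for the swap.} Before the swap, such an $a^*_{e'}$ valued $A_{a^*_e}=h^*_j$ at $1$. Since $p_{a^*_e}\ge 0$, envy-freeness already forced $v_{a^*_{e'}}(A_{a^*_{e'}})+p_{a^*_{e'}}\ge 1+p_{a^*_e}$. I would bound the needed subsidy using Lemma~\ref{subsidy-is-1} and the $\ell(\cdot)$ characterization of Theorem~\ref{theorem:min-subsidy-ell(i)}, recomputing minimal subsidies via longest paths. Before the swap, every path into $a^*_e$ carried an incoming edge of weight $0$ from the relevant agents. After the swap, the path now goes into $a$, whose subsidy is already $1$. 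The recomputation should show that no $\ell(i)$ increases.

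Repeating this swap for every such $a$ gives an outcome, with subsidy still at most $k+q$, in which the $k$ unmatched agents of $N_2$ hold houses of $H_2$. Equivalently, we choose among optimal outcomes one maximizing the number of $N_2$ agents holding $H_2$ houses, and the swap argument shows this number equals $k$.
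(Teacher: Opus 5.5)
Your conclusion is right, but Step~1 rests on a false inference, and the intermediate statement it produces is false.

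An augmenting path that ends at an \emph{unallocated} house of $H_3$ does not contradict permutation maximality. Condition~(2) of Theorem~\ref{theorem:SAGT2019} only compares $\mathbf{A}$ with reassignments of the houses already handed out. This is exactly why the paper warns that EFable allocations may admit augmenting paths (the example at the end of Section~\ref{section:Preliminaries} and the remark before Lemma~\ref{subsidy-is-0}). Your other swap, an agent of $N_1$ holding some $h_t$ traded with an unmatched agent of $N_2$, is a genuine permutation and is fine.

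Your assertion that all of $H_3$ is allocated and exactly $k$ agents of $N_2$ are unmatched fails for an arbitrary envy-free outcome with subsidy at most $k+q$. Take $U=\{1,2,3\}$, $S_1=\{1\}$, $S_2=\{1,2\}$, $S_3=\{3\}$, $k=1$, $q=3$. Give each $a^*_e$ the house $h_{a^*_e}$, give $a_1,a_2$ the houses $h_1,h_2$, give $a_3,a_4$ the houses $h^*_1,h^*_2$, and pay subsidy $1$ to $a_3,a_4,a^*_1,a^*_2$. This outcome is envy-free with total subsidy $4=k+q$, yet $h_3$ is unallocated and $k+1$ agents of $N_2$ are unmatched. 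Your statement could be recovered for a minimum-subsidy outcome via Lemma~\ref{subsidy-is-0}, but you do not argue that. Your closing switch to ``optimal outcomes'' is also not equivalent to what you set out to prove.

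None of this is needed. Once Claim~\ref{claim:h_e-should-be-allocated} excludes $H_1$, every agent of $N_2$ holds a house of $H_2\cup H_3$. Then $|N_2|=m+k>m=|H_2|=|H_3|$ forces at least $k$ of them into each part. That counting is the paper's entire proof.

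In Step~2 you can simply cite Claim~\ref{claim:h_e-should-be-allocated}, as the paper does. Your attempted re-derivation ends with ``the recomputation should show that no $\ell(i)$ increases,'' which is not an argument. The missing observation takes one line. Before the swap, $a^*_e$ values both $h^*_j$ and $h_{a^*_e}$ at $1$, and $a$ values both at $0$, so the envy-freeness constraints between $a$ and $a^*_e$ give $p_a=p_{a^*_e}$. Swapping the two houses while keeping $\mathbf{P}$ fixed therefore leaves both agents' utilities, and their views of each other, unchanged. Every third agent sees the same collection of house--subsidy pairs as before. The outcome stays envy-free with the same total subsidy, and no longest-path recomputation is required.
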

\begin{proof}
    Since $|N| = n + m + k > n + m = |H_1| + |H_2| = |H_1| + |H_3|$, at least $k$ houses from $H_3$ must be allocated, and at least $k$ houses from $H_2$ must also be allocated.
    By Claim~\ref{claim:h_e-should-be-allocated}, the agents in $N_2$ do not receive houses from $H_1$. 
    Moreover, since $|N_2| = m + k > m = |H_2| = |H_3|$, some agents in $N_2$ are allocated houses from $H_2$, while others are allocated houses from $H_3$.
\end{proof}
    
    Combining these two claims, we conclude that every unmatched agent in $N_2$ will receive a house from $H_2$.  
    Since the total number of agents is $|N|=n + m + k > n + m=|H_1|+|H_3|$, at least $k$ agents in $N_2$ cannot be matched. 
    It follows that $|T| \geq k$.

    It remains to show that  $|\bigcup_{j \in T} S_j| \leq q$.  
    Since at least $k$ agents in $N_2$ cannot be assigned any house in $H_3$, each of them requires a subsidy of at least $1$ in order to remain envy-free with respect to an agent assigned to a house in $H_3$.
    Consequently, the total subsidy for these $|N_2|=m+k$ agents is at least $k$. 
    From the previous analysis, each unmatched agent is assigned a house from $H_2$.  
    If $|\bigcup_{j \in T} S_j| > q$, then at least $q$ covered agents in $ N_1 $ evaluate the total utility of unmatched agents as $2$.  
    Therefore, these covered agents also require a subsidy of at least $1$ to avoid envy toward the unmatched agents.  
    Hence, the minimum total subsidy satisfies  
     \begin{equation*}
     \sum_{i \in N} p_i \geq k + |\bigcup_{j \in T} S_j| > k + q,
    \end{equation*}  
    which contradicts the assumption that $\sum_{i \in N} p_i \leq k + q$. 
    It follows that $|\bigcup_{j \in T} S_j| \leq q$, as claimed.
\end{proof}

\section{Approximation Hardness}
\label{Appendix:approximation-preserving}

\begin{theorem}
\label{theorem:approximation-hardness}
    If there exists a polynomial-time $\alpha$-approximation algorithm for \textnormal{\problem{Minimum Subsidy for Binary}}, then there is a polynomial-time $\alpha$-approximation algorithm for \textnormal{\problem{Minimum $k$-Union}}.
\end{theorem}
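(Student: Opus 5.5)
The plan is to reuse the reduction of Theorem~\ref{theorem:reduction} almost verbatim, adjusted so that the optimum of the \problem{Minimum Subsidy for Binary} instance equals the optimum of the \problem{Minimum $k$-Union} instance. The additive $k$ is the problem. An $\alpha$-approximation for $k+q^*$ only gives $\alpha(k+q^*) - k$ for the union size, which is not an $\alpha$-approximation for $q^*$.

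\textbf{Modification.} The first step removes the $k$ unavoidable subsidies paid to the $N_2$-agents who receive $H_2$-houses. Let $H_3$ consist of $m$ houses valued $1$ by $N_2$, and keep $|N_2| = m+k$. The $k$ displaced $N_2$-agents get houses they value $0$. I want them to need no subsidy, which means no agent in $N_2$ may end up with utility exceeding theirs.

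A cleaner route is to blow up the $N_1$ side. Replace each element agent $a^*_e$ by $L$ copies, each with its own private house in $H_1$ and the same edges into $H_2$, where $L > k$ (say $L = k+1$, or polynomially large). Every covered element then costs $L$, and the total subsidy of a solution with union $Q$ is $k + L\cdot|Q|$.

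\textbf{Transferring the guarantee.} Given an $\alpha$-approximate outcome, normalize it using Claims~\ref{claim:h_e-should-be-allocated} and~\ref{claim:N_2-get-H_2-and-H_3} so that each displaced $N_2$-agent holds an $H_2$-house; this does not increase the subsidy. Read off $T$ as the set of indices of the allocated $H_2$-houses. If $|T| > k$, drop extra indices, which can only shrink the union. The subsidy is at least $k + L|\bigcup_{j\in T} S_j|$. The optimum is at most $k + Lq^*$ by the $(\Rightarrow)$ direction. Therefore
\[
|\textstyle\bigcup_{j\in T} S_j| \le \alpha q^* + (\alpha-1)k/L .
\]
With $L \ge k\,n$, or more simply $L > (\alpha-1)k$ together with integrality and $q^*\ge 1$, the extra term is below $1$. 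This yields an $\alpha$-approximation up to rounding, or, by taking $L$ polynomially large, an $(\alpha+o(1))$-approximation.

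\textbf{Main obstacle.} The hard part is making the additive $k$ disappear exactly rather than asymptotically. I would first try to argue that the displaced $N_2$-agents can be made to need no subsidy by adding $k$ dummy houses valued $0$ by everyone, except that each is valued $1$ by nobody else. Since the $N_2$-agents envy the $H_3$-holders, that attempt fails, so I would fall back on the scaling argument. I would also check that the blown-up instance has polynomial size and that the envy-freeness verification of Theorem~\ref{theorem:reduction} goes through unchanged for copies, since copies value each other's private houses at $0$.
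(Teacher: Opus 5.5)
Your proposal is essentially the paper's proof: the paper also replicates each element $r$ times ($U'=U\times[r]$, $S'_t=S_t\times[r]$, which is exactly your $L$ copies of each $a^*_e$ with private $H_1$-houses), uses that the optimum becomes $k+r\cdot OPT_{MU}$, and derives $q\le \alpha\, OPT_{MU}+(\alpha-1)k/r$. Your caveat that this gives $\alpha$ only up to rounding (or $\alpha+o(1)$) is warranted and applies equally to the paper's ``$r$ sufficiently large'' step: an additive term below $1$ does not by itself yield $q\le\alpha\, OPT_{MU}$ when $\alpha\, OPT_{MU}\notin\mathbb{Z}$, although for rational $\alpha=a/b$, taking $r>(a-b)k$ makes the bound exact.
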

\begin{proof}
     Indeed, the only difference between the reduction in Theorem~\ref{theorem:reduction} for the \problem{Minimum Subsidy for Binary} problem and the \problem{Minimum $k$-Union} problem is that the solution for \problem{Minimum Subsidy for Binary} involves an additional factor of $k$. 
The key idea is to use a reduction as in Theorem~\ref{theorem:reduction} and to amplify the parameter $q$ in the \textsc{Minimum $k$-Union} instance by replicating elements in $U$, thereby minimizing the influence of $k$ on the subsidy bound.

Let $\mathcal{I}$ be an instance of \textsc{Minimum $k$-Union} with a finite set $U:=\{e_1,e_2, \cdots, e_n\}$ of elements, subsets $S_1, S_2, \ldots, S_m$ of $U$, and positive integers $q, k$ such that $q \le |U|$ and $k < m$; 
the goal is to determine whether there exists a subset $T \subseteq [m]$ such that $|T| = k$ and $\left| \bigcup_{t \in T} S_t \right| \le q$.
We construct an instance $\mathcal{I}'$ of \problem{Minimum Subsidy for Binary} as follows. 
Choose a large enough replication factor $r$, and define a new universe $U' = U \times \{1, 2, \dots, r\}$, where each element $e \in U$ is replicated $r$ times. 
For each set $S_t$, define a new set $S'_t = S_t \times \{1, 2, \dots, r\}$ for all $S_t \in \{S_1, S_2, \ldots, S_m\}$ , where each element $e \in S_t$ is replicated $r$ times. 
The house instance $\mathcal{I}'$ is then formed based on these sets, following the same reduction as in Theorem~\ref{theorem:reduction}. 
The size of $\mathcal{I}'$ is $n' = |U'| = r \cdot n $, which is polynomial in $n$.

From the reduction, we have that for any subset $T \subseteq [m]$ with $|T| = k$, the union size in $\mathcal{I}$ is $|\bigcup_{t \in T} S_t| = q$ if and only if the total subsidy in $\mathcal{I}'$ is $k + r \cdot q$. 
Thus, the minimum subsidy for $\mathcal{I}'$ is $OPT_{MSB} = k + r \cdot OPT_{MU}$, where $OPT_{MU}$ is the optimal value for the instance $\mathcal{I}$.

Now, suppose we have an $\alpha$-approximation algorithm for \textsc{Minimum Subsidy for Binary}. Applied to $\mathcal{I}'$, it returns a solution with subsidy $P$ such that
\begin{equation*}
    P \le \alpha \cdot OPT_{MSB} = \alpha \cdot (k + r \cdot OPT_{MU}).
\end{equation*}
From the reduction, this solution corresponds to a subset $T \subseteq [m]$ with $|T|=k$ and union size $q$ in $\mathcal{I}$ satisfying:
\begin{equation*}
  P= k + r \cdot q. 
\end{equation*}
Thus,
\begin{align*}
    q = \frac{P - k}{r} &\le \frac{\alpha \cdot (k + r \cdot OPT_{MU}) - k}{r} \\
    &= \alpha \cdot \left(\frac{k}{r} + OPT_{MU}\right) - \frac{k}{r}.
\end{align*}
Setting $r>>k$, we have $\frac{k}{r} \to 0$. 
Therefore, $ q \le \alpha \cdot OPT_{MU}$ holds whenever $r$ is sufficiently large.
This means that the solution for $\mathcal{I}$ achieves a union size at most $\alpha \cdot OPT_{MU}$, which implies a polynomial-time $\alpha$-approximation algorithm for \textsc{Minimum $k$-Union}.

Therefore, if \textsc{Minimum Subsidy for Binary} has an $\alpha$-approximation algorithm, then \textsc{Minimum $k$-Union} also has an $\alpha$-approximation algorithm.   
\end{proof}

Combining Theorem~\ref{theorem:approximation-hardness} with the hardness of $O(n^{1/4})$ for \textsc{Minimum $k$-Union} under ``Dense versus Random'' conjecture for DkS to hypergraphs~\cite{chlamtavc2017minimizing}, we can deduce that \textsc{Minimum Subsidy for Binary} is hard to approximate to within a factor of $\Omega(n^{1/4})$.

\section{Missing Proofs in Section~\ref{section:k-types}}
\label{appendix-section:k-types}

\begin{proof}[\textbf{Proof of Lemma~\ref{lemma:min-subsidy-conditions}}] 

Lemma~\ref{subsidy-is-0} implies that an EFable allocation with minimum total subsidy corresponds to a maximum cardinality matching between $N$ and $H$, denoted by $\mathcal{M}$.

Since the unique partition $(N_S, N_L, H_S, H_L)$ satisfies conditions (a), (b), and (c) of Theorem~\ref{theorem-EFM-partition}, there exists an $H_S$-saturating matching in $G_{\mathcal{I}}[N_S, H_S]$ and an $N_L$-saturating matching in $G_{\mathcal{I}}[N_L, H_L]$.
It follows that the size of any maximum cardinality matching $\mathcal{M}$ in $G_{\mathcal{I}}$ must be $|H_S| + |N_L|$.
Thus, the size of $\mathcal{M}$ is $|H_S| + |N_L|$.

\textbf{Part (1).} 
Since there are no edges between $N_S$ and $H_L$ by condition (a) of Theorem~\ref{theorem-EFM-partition}, the neighborhood of $N_S$ is contained in $H_S$.
Suppose that a maximum-cardinality matching admits a subset $N_S$ of agents with neighbor set $H_S$ such that $|H_S| \ge |N_S|$.
Then the size of the maximum-cardinality matching is 
\begin{equation*}
|H_S| + |N_L| \geq |N_S| + |N_L| = |N|, 
\end{equation*}
which implies that every agent can be matched and no subsidies are required.

Next, we prove that in any EFable allocation induced by a maximum cardinality matching, all houses in $H_S$ must be allocated to agents in $N_S$.
Since this partition satisfies property (b) of Theorem~\ref{theorem-EFM-partition}, every house in $H_S$ must be matched in any maximum cardinality matching.
Now suppose that there exists a house $h \in H_S$ that is allocated to an agent $j \notin N_S$ in a maximum-cardinality matching $\mathcal{M}'$.
Therefore, only $|H_S|-1$ houses can be assigned to agents in $N_S$.
Since there are no edges between $N_S$ and $H_L$ by property~(a) of Theorem~\ref{theorem-EFM-partition}, agents in $N_S$ regard all houses in $H_L$ as having value $0$.
As a result, at least $|N_S|-(|H_S|-1)$ agents in $N_S$ cannot be assigned to any house of value $1$.
Consequently, agents in $N_S$ can be matched to at most $|H_S|-1$ houses, while at most $|N_L|$ agents in $N_L$ can be matched.
Therefore, the matching size satisfies 
\begin{equation*}
    |\mathcal{M'}| \leq |N_L|+|H_S|-1 < |N_L|+|H_S|=|\mathcal{M}|,
\end{equation*}
which contradicts the maximality of $\mathcal{M'}$.
Therefore, all houses in $H_S$ must be allocated to agents in $N_S$.

Since all maximum cardinality matchings match the same number of agents, and condition~(c) of Theorem~\ref{theorem-EFM-partition} guarantees that $G_{\mathcal{I}}[N_L,H_L]$ admits an $N_L$-saturating matching, every agent in $N_L$ must be matched in any maximum cardinality matching.
Hence, any unmatched agent in $\mathcal{M}$ can only belong to $N_S$.
Moreover, by condition~(b) of Theorem~\ref{theorem-EFM-partition}, in the subgraph $G_{\mathcal{I}}[N_S,H_S]$, at least $|H_S|$ agents in $N_S$ must be matched.
Let $t := |N_S| - |H_S|$ denote the number of unmatched agents, who can only be assigned houses in $H_L$ that they do not like in $\mathbf{A^*}$.
Since the total subsidy is nonzero, at least one agent must receive a subsidy of $1$. 
Thus, each unmatched agent in $\mathcal{M}$, that is, each of the $t$ unmatched agents, must receive a subsidy of $1$ to maintain envy-freeness.

\textbf{Part (2).} Since the subgraph $G_{\mathcal{I}}[N_L, H_L]$ admits an $N_L$-saturating matching by condition~(c) in Theorem~\ref{theorem-EFM-partition}, it follows that all agents in $N_L$ must be matched in any maximum cardinality matching.
Since $\mathbf{A^*}$ is induced by a maximum cardinality matching, all agents in $N_L$ must be matched.


\textbf{Part (3).} 
Let $N_0$ denote the set of agents unmatched in $\mathcal{M}$. 
Since all agents in $N_L$ must be matched by Part~(2), it follows that $N_0 \subseteq N_S$.
For agents in $N_S \setminus N_0$ who are matched, the fact that there are no edges between $N_S$ and $H_L$ implies that they do not envy any agent in $N_L$ or in $N_0$ who receives a subsidy of $1$. 
Indeed, they have a value of $0$ to the houses allocated to those agents, whereas their own allocated house has value $1$.
By Lemma~\ref{subsidy-is-1}, for any EFable allocation, each agent requires at most a subsidy of $1$. 
Therefore, the agents in $N_S \setminus N_0$ are strictly envy-free without any subsidy.
Thus, in any minimum total subsidy vector, in addition to allocating a subsidy of $1$ to each agent in $N_0$ by Part~(1), there exists a subset $\hat{N}_L \subseteq N_L$ such that each agent in $\hat{N}_L$ receives a subsidy of at most $1$, by Lemma~\ref{subsidy-is-1}.

\textbf{Part (4).} We prove this statement by contradiction. 
We first prove that all agents of the same type must belong either entirely to $N_S$ or entirely to $N_L$.
Assume that there exist two agents $i,j \in N$ of the same type such that $i \in N_S$ and $j \in N_L$. 
Since the subgraph $G_{\mathcal{I}}[N_L, H_L]$ admits an $N_L$-saturating matching, every agent $j \in N_L$ must be adjacent to some house $h \in H_L$.  
Because agents $i$ and $j$ are of the same type, agent $i$ must also be adjacent to house $h$, which contradicts that there are no edges between $N_S$ and $H_L$. 

Next we prove that all agents of the same type must belong either entirely to $\hat{N}_{L}$ or entirely to $N_L$.
For any agent $i \in \hat{N}_L$ who receives a subsidy of $1$, the agent is matched by Part~(2), and thus obtains a total utility of $2$.  
If there exists an agent $j \in N_{L} \setminus \hat{N}_{L}$ of the same type who receives no subsidy, then the total utility of agent $j$ at most $1$.  
In this case, agent $j$ would envy agent $i$, which contradicts the assumption that the minimum total subsidy outcome is envy-free.
\end{proof}

\paragraph{The Detailed Algorithm.} Our algorithm begins with a given partition $(N_{S}, \hat{N}_{L}, N_{L}\setminus\hat{N}_{L})$  and the new bipartite graph $G^*=(U \cup V,E)$. 
It proceeds in two phases: first updating $G^*$, and then computing a maximum matching in the updated graph $G^*$.
We remove all houses in $H_S$ and all agents in $N_S$ from $G^*$.
We update the graph $G^*$ by deleting the agent type vertices corresponding to $N_S$ in $U$ and the house type vertices corresponding to $H_S$ in $V$.
Formally, let
\[
U \leftarrow U \setminus U(N_S), \qquad V \leftarrow V \setminus V(H_S),
\]
where $U(N_S) \subseteq U$ (resp.\ $V(H_S) \subseteq V$) denotes the set of agent type (resp. house type) vertices corresponding to agent types whose agents belong to $N_S$ (resp. house types whose houses belong to $H_S$) (this update is feasible; see Lemma~\ref{lemma-delete-H_S} for a formal proof).
We define the house set
\[H^* := \{h \in H \mid \exists\, i \in N_{L}\setminus\hat{N}_{L} \ \text{such that} \ v_i(h)=1\},\]
i.e., the set of houses valued at $1$ by some agent in $N_{L}\setminus\hat{N}_{L}$. 
We then delete all edges between $\hat{N}_{L}$ and $H^*$, i.e.,
\[
E \;\leftarrow\; E \setminus \{\, e(i,h) \mid i \in \hat{N}_{L},\ h \in H^* \,\}.
\]
Deleting edges from $\hat{N}_L$ to $H^*$ is reasonable (see Lemma~\ref{lemma-delete-E} for a formal proof).
Moreover, by deleting all edges between $\hat{N}_{L}$ and $H^*$, we ensure that agents in $\hat{N}_{L}$ cannot be assigned houses in $H^*$. 
This is necessary because, in any allocation satisfying condition~(2) of Lemma~\ref{lemma:min-subsidy-conditions}, all agents in $\hat{N}_{L}$ must be matched, which requires them to be assigned to houses to which they are adjacent in $G^*$.
We construct a new agent type vertex $u$ with capacity $t = |N_S| - |H_S|$, representing the $t$ unmatched agents in condition (1) of Lemma~\ref{lemma:min-subsidy-conditions}, and add edges between $u$ and all remaining houses not in $H^*$.
Since these agents have a value of $0$ for all houses in $N_L$, it does not matter which house they receive, but they must not be assigned any house in $H^*$, as this would affect the envy-freeness of agents in $N_L \setminus \hat{N}_L$ who receive no subsidy.
The updated graph $G^*$ satisfies the conditions of Lemma~\ref{lemma:min-subsidy-conditions} (we show in Lemma~\ref{lemma:update-G} that such operations do not change the feasibility of partitions).
We use a maximum matching to check whether there exists an allocation that assigns each agent exactly one house.
This corresponds to a many-to-many matching problem, also known as a capacitated $b$-matching problem\footnote{In a capacitated $b$-matching problem, each vertex $v$ is assigned a capacity $b(v)$, restricting the number of incident edges that can be selected, and the goal is to maximize the total weight or the cardinality of the matching.}.
If the size of the maximum matching $\mathcal{M}^*$ in the updated graph $G^*$ equals $t + |N_L|$, which is the total capacity of the agent nodes, then $G^*$ admits a feasible house allocation; that is, every agent can be assigned a house.
In this case, we return a total subsidy of $t + |\hat{N}_{L}|$ and declare the partition feasible.
Otherwise, we declare that no house allocation satisfying the conditions of Lemma~\ref{lemma:min-subsidy-conditions} exists for this partition and return ``No such partition''.







We summarize the procedure to determine whether a given partition is feasible in Algorithm~\ref{Alg-Partition}. 

Next, we establish the following lemma that will be used to prove the correctness of our algorithm.

\begin{lemma}
    \label{lemma-delete-H_S} 
   If any house in a house type belongs to $H_S$, then all houses of that house type are included in $H_S$.
\end{lemma}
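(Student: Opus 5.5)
The plan is to argue by contradiction, using only the defining properties of the EFM partition in Theorem~\ref{theorem-EFM-partition} together with the definition of house types from Observation~\ref{observation:types-of-agents}. The key point is that two houses of the same type have identical neighbourhoods in $G_\mathcal{I}$. This is because a house type is the vector $\mathbf{c}\in\{0,1\}^k$ recording, for each agent type $N_i$, whether agents of that type value the house at $1$. So if $h,h'$ share a type, then for every agent $a\in N$ we have $v_a(h)=v_a(h')$, and hence $(a,h)\in E$ if and only if $(a,h')\in E$.

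Suppose a house type contains a house $h\in H_S$ and a house $h'\notin H_S$, so $h'\in H_L$. By condition~(b) of Theorem~\ref{theorem-EFM-partition}, the subgraph $G_\mathcal{I}[N_S,H_S]$ has an $H_S$-saturating matching. In particular, $h$ is matched in that matching to some agent $a\in N_S$, so $(a,h)\in E$. Since $h'$ has the same neighbourhood as $h$, we also get $(a,h')\in E$. This is an edge between $N_S$ and $H_L$, which contradicts condition~(a). Therefore every house of that type lies in $H_S$, which is the claim.

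I do not expect a real obstacle here. The only point needing care is to state explicitly that equal house type means equal neighbourhoods in $G_\mathcal{I}$, and this follows directly from the agents within a type sharing one binary valuation. A consequence is that $V(H_S)$ is a union of whole house-type vertices. This is exactly what makes the deletion $V\leftarrow V\setminus V(H_S)$ in the algorithm well defined without splitting any capacity.
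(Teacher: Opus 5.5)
Your proof is correct and follows the paper's argument: condition~(b) gives $h$ a neighbour in $N_S$, equal house type transfers that edge to $h'$, and condition~(a) then rules out $h'\in H_L$. You also state explicitly that houses of the same type have identical neighbourhoods in $G_\mathcal{I}$, which is the step the paper's proof leaves implicit.
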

\begin{proof}
   We prove the lemma by contradiction. 
    Suppose there exists a house $h \in H_S$ in house type $v$, and another house $h' \in v$ such that $h' \notin H_S$. 
    Since $h \in H_S$, by condition~(b) of Theorem~\ref{theorem-EFM-partition}, $h$ is adjacent to some agents in $N_S$. 
    On the other hand, since $h' \notin H_S$, by condition~(a) of Theorem~\ref{theorem-EFM-partition}, $h'$ has no edges to any agent in $N_S$. 
    This contradicts the definition of house type, in which all houses have identical neighbors among agent types.
    Thus, all houses in the same house type must either all belong to $H_S$ or all not belong to $H_S$. 
\end{proof}

Note that property~(4) of Lemma~\ref{lemma:min-subsidy-conditions} implies that all agents of the same type either all belong to $N_S$ or none belong to $N_S$. 
Combining Lemma~\ref{lemma-delete-H_S}, the algorithm can directly remove from $U$ and $V$ all agent type and house type vertices corresponding to $N_S$ and $H_S$, respectively.

\begin{lemma}
    \label{lemma-delete-E}
    If any house in a house type belongs to $H^*$, then all houses of that house type are included in $H^*$.
\end{lemma}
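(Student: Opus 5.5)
The statement says that membership in $H^*$ depends only on a house's type. I would prove it directly from the definitions of $H^*$ and of house types (Observation~\ref{observation:types-of-agents}), mirroring the proof of Lemma~\ref{lemma-delete-H_S}. The type of a house $h$ is the binary vector $\mathbf{c}\in\{0,1\}^k$ whose $i$-th entry is $1$ exactly when agents of type $N_i$ value $h$ at $1$. Hence any two houses $h,h'$ of the same house type satisfy $v_a(h)=v_a(h')$ for every agent $a\in N$, because $v_a$ is determined by $a$'s agent type.

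Now take a house $h\in H^*$ and a house $h'$ of the same type. By the definition of $H^*$ there is an agent $i\in N_L\setminus\hat{N}_L$ with $v_i(h)=1$. By the observation above, $v_i(h')=v_i(h)=1$, and $i$ is still in $N_L\setminus\hat{N}_L$. So $h'$ meets the defining condition of $H^*$, and therefore $h'\in H^*$.

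It may help to note that, by condition~(4) of Lemma~\ref{lemma:min-subsidy-conditions}, the set $N_L\setminus\hat{N}_L$ is a union of whole agent types. Then $H^*$ is exactly the union of those house types whose vector $\mathbf{c}$ has a $1$ in some coordinate belonging to those agent types. This is what justifies deleting edges at the level of type vertices in $G^*$, rather than house by house.

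There is no real obstacle here. The only point to state carefully is that equal type vectors imply equal valuations for every individual agent, not merely for some representative of each type. That follows immediately because all agents in the same $N_i$ share the same $v_i$.
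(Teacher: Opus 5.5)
Your proof is correct and is essentially the paper's argument. The paper phrases it as a contradiction (assume $h\in H^*$ and $h'\notin H^*$ of the same type, then the witness $i\in N_L\setminus\hat{N}_L$ with $v_i(h)=1$ also gives $v_i(h')=1$), while you give the same reasoning directly; your aside about condition~(4) is not needed for the lemma.
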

\begin{proof}
    We prove the lemma by contradiction. 
Suppose there exists a house $h \in H^*$ in house type $v$, and another house $h' \in v$ such that $h' \notin H^*$. 
By the definition of $H^*$, there exists an agent $i \in N_{L}\setminus\hat{N}_{L}$ with $v_i(h)=1$. 
However, $h' \notin H^*$ implies that there does not exist any agent $i \in N_{L}\setminus\hat{N}_{L}$ with $v_i(h')=1$, which contradicts the fact that $h$ and $h'$ belong to the same house type $v$.
\end{proof}

\begin{lemma}
\label{lemma:update-G}
    The update operations on $G^*$ do not change the feasibility of any given partition $(N_{S}, \hat{N}_{L}, N_{L}\setminus\hat{N}_{L})$.
\end{lemma}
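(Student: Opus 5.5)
We have to show that a partition $(N_S,\hat{N}_L,N_L\setminus\hat{N}_L)$ is realizable by an EFable allocation satisfying Lemma~\ref{lemma:min-subsidy-conditions} with total subsidy $t+|\hat{N}_L|$ if and only if the updated capacitated graph $G^*$ admits a $b$-matching of size $t+|N_L|$. The plan is to prove the two directions separately, treating the three update operations one at a time: deleting $N_S$ and $H_S$, deleting the edges between $\hat{N}_L$ and $H^*$, and adding the vertex $u$ with capacity $t$ adjacent only to house types outside $H^*$. Lemmas~\ref{lemma-delete-H_S} and~\ref{lemma-delete-E} guarantee that each operation acts on whole type vertices. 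Hence a $b$-matching in $G^*$ and an assignment of individual houses to individual agents can be translated into each other by splitting or aggregating multiplicities within types.

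\textbf{Necessity (feasible partition $\Rightarrow$ matching of size $t+|N_L|$).}
\begin{enumerate}
\item Take an allocation $\mathbf{A}$ realizing the partition. By Part~(1) of Lemma~\ref{lemma:min-subsidy-conditions}, all of $H_S$ goes to $N_S$, so removing $N_S$ and $H_S$ loses nothing.
\item The $t$ unmatched agents of $N_S$ receive houses in $H_L$. None of these houses lies in $H^*$: otherwise some agent of $N_L\setminus\hat{N}_L$, with utility exactly $1$, would see an agent holding a house she values at $1$ plus a subsidy of $1$, and would envy.
\item For the same reason, no agent of $\hat{N}_L$ holds a house of $H^*$. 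By Part~(2) every such agent is matched, so she uses a surviving edge.
\item Aggregating $\mathbf{A}$ by types therefore gives a $b$-matching of size $|N_L|+t$ in the updated $G^*$.
\end{enumerate}

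\textbf{Sufficiency (matching of size $t+|N_L|$ $\Rightarrow$ feasible partition).}
\begin{enumerate}
\item Disaggregate the $b$-matching into an assignment of distinct houses. Give $H_S$ to $N_S$ via the $H_S$-saturating matching of Theorem~\ref{theorem-EFM-partition}(b); the remaining $t$ agents of $N_S$ take the houses assigned to $u$.
\item Set subsidies to $1$ on $\hat{N}_L$ and on the $t$ unmatched agents, and $0$ elsewhere. It remains to check envy-freeness.
\item Agents of $N_L\setminus\hat{N}_L$ have utility $1$. Every subsidized agent holds a house outside $H^*$, which they value at $0$, so they value each subsidized bundle at exactly $1$. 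Unsubsidized bundles are worth at most $1$ to them.
\item Agents of $\hat{N}_L$ have utility $2$, which is the maximum possible value of any bundle.
\item Matched agents of $N_S$ have utility $1$. Every subsidized agent other than one of the $t$ unmatched agents of $N_S$ holds a house in $H_L$, and $N_S$ has no edges to $H_L$, so they value those bundles at $1$. The $t$ unmatched agents hold houses assigned to $u$, which also lie in $H_L$ and are worth $0$ to them.
\item The $t$ unmatched agents have utility $1$. They value every house of $H_L$ at $0$, so any other subsidized agent's bundle, which lies in $H_L$, is worth exactly $1$ to them. Unsubsidized bundles are worth at most $1$.
\end{enumerate}

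I expect the main obstacle to be the sufficiency direction. The agents of $N_S$ have no edges into $H_L$, and every subsidized agent other than an unmatched $N_S$ agent holds a house there; this is what makes the checks in steps 5 and 6 go through. The other delicate point is confirming that the disaggregated assignment is a valid allocation, meaning distinct houses and each house used at most once; the capacities on the type vertices ensure this.
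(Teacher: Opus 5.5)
Your proposal is correct. Its necessity half is exactly the paper's argument. The paper proves this lemma only by checking that no update removes anything a realizing allocation needs: $H_S$ is fixed to $N_S$ by condition (1), and a subsidized agent holding a house of $H^*$ would be envied by an unsubsidized agent of $N_L\setminus\hat{N}_L$. That is your steps 1--4.

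Where you go beyond the paper is the converse. For the edges added at $u$, the paper only remarks that they ``do not affect the total subsidy'' because agents of $N_L\setminus\hat{N}_L$ value houses outside $H^*$ at $0$. It reassembles the allocation only in passing, in the proof of Theorem~\ref{theorem:k-types}, and never checks that allocation for envy from the matched agents of $N_S$, from $\hat{N}_L$, or from the $t$ unmatched agents. Your steps 5--6 supply that check, using that $N_S$ has no edges into $H_L$ and that every subsidized bundle lies in $H_L$. This direction is what Algorithm~\ref{Alg-Partition} actually relies on when it reports that the total subsidy can be $t+|\hat{N}_L|$ as soon as the matching has size $t+|N_L|$. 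So your version gives a genuine soundness guarantee for the algorithm that the paper leaves implicit, at the cost of some length.

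One point you should state explicitly: your translation between allocations and $b$-matchings needs each type-to-type edge to be usable with multiplicity, as in a max-flow formulation with uncapacitated middle arcs. Many agents of one type may receive houses of a single type, and the paper's footnote on $b$-matching (``number of incident edges that can be selected'') does not make this clear.
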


\begin{proof}
We prove the lemma by showing that none of the update operations removes any edge or capacity that is required by a feasible allocation.
The condition~(1) stated in Lemma~\ref{lemma:min-subsidy-conditions} implies that the houses $H_{S}$ should all be allocated to $N_{S}$ in $\mathbf{A^*}$.
Therefore, removing the houses in $H_S$ and the agents in $N_S$ in Line~1 of Algorithm~\ref{Alg-Partition} does not affect the feasibility of the partition, since these assignments are fixed in any feasible allocation.

In the optimal allocation $\mathbf{A}^*$, conditions~(1) and~(3) imply that the $t$ unmatched agents together with those in $\hat{N}_{L}$ who receive a subsidy of $1$ cannot be assigned any house that is valued at $1$ by agents in $N_{L}\setminus\hat{N}_{L}$. 
Otherwise, an agent in $N_{L}\setminus\hat{N}_{L}$, who receives zero subsidy, would obtain a total utility of $1$, while an agent in $\hat{N}_{L}$ (or among the $t$ unmatched agents), who receives a subsidy of $1$ and is assigned a house of value $1$, would obtain a total utility of $2$.
This would cause envy, thereby contradicting the fact that the allocation $\mathbf{A}^*$ with minimum total subsidy is envy-free.
Therefore, deleting all edges between $\hat{N}_{L}$ and $H^*$ only removes assignments that cannot appear in any feasible EFable allocation satisfying Lemma~\ref{lemma:min-subsidy-conditions}.
Hence, the update operation in Line~3 of Algorithm~\ref{Alg-Partition} preserves the feasibility of the partition.

Since these $t$ agents have zero valuation for all remaining houses, we add edges between them and the remaining houses not in $H^*$ to ensure that they can be assigned a house later, while preventing them from being assigned any house in $H^*$.
This operation in Line 6 of Algorithm~\ref{Alg-Partition} does not affect the total subsidy, since the agents in $N_{L}\setminus\hat{N}_{L}$ also have zero valuation for all remaining houses not in $H^*$; therefore, it does not change the feasibility of the partition.
\end{proof}



\begin{algorithm}[t]
\caption{Checking Partition Feasibility}
\label{Alg-Partition}
\textbf{Input:} A partition $(N_{S}, \hat{N}_{L}, N_{L}\setminus\hat{N}_{L})$ and $G^*=(U \cup V,E)$.\\
\textbf{Output:} Total subsidy can be $t+|\hat{N}_{L}|$ or ``No such partition''. 

\begin{algorithmic}[1]
\STATE Update $U \leftarrow U \setminus U(N_S), V \leftarrow V \setminus V(H_S)$;
\STATE Define $H^* := \{h \in H \mid \exists\, i \in N_{L}\setminus\hat{N}_{L} \ \text{s.t.} \ v_i(h)=1\}$; 
\STATE Update $E \leftarrow E\setminus \{e(i,h):i \in \hat{N}_{L}, h\in H^{*}\}$;
\STATE Let $t :=|N_{S}|-|H_{S}|$;
\STATE Construct a new agent type vertex $u$ with capacity $t$; 
\STATE Update $E \leftarrow E \cup \{(u,h) \mid h \in H \setminus H^{*}\} $;
\STATE Compute a maximum matching $\mathcal{M^*}$ for updated $G^*$;
\IF{the size of maximum matching is $t + |{N}_{L}|$}
\STATE \textbf{Return:} Total subsidy can be $t+|\hat{N}_{L}|$.
\ELSE 
\STATE \textbf{Return:} ``No such partition''.
\ENDIF
\end{algorithmic}
\end{algorithm}

\begin{proof}[\textbf{Proof of Theorem~\ref{theorem:k-types}}]
    We first prove the correctness of our algorithm and then show that it runs in polynomial time. 
    It can be argued that when the partition coincides with the partition defined by the optimal allocation $\mathbf{A^*}$ with the minimum total subsidy, the allocation must exist after updating the graph $G^*$ (by Lemma~\ref{lemma:update-G}). 
    Consequently, we can construct the corresponding allocation $\mathbf{A^*}$ as follows: 
    first, assign houses in $H_S$ to agents in $N_S$ according to a maximum cardinality matching $\mathcal{M}$ computed on the original graph $G^*$ before any updates; 
    then, allocate houses in $H_L$ to agents in $N_L$ and the remaining unmatched agents according to the maximum matching $\mathcal{M^*}$ computed on the updated graph $G^*$.
   Therefore, it suffices to output the allocation requiring the minimum subsidy.

    
    Next, we show that the algorithm runs in $O((k \cdot 2^{2k})^{(1+o(1))}) + O(n + m)$ time.
    The running time of our algorithm consists of two parts. 
    A preprocessing step takes $O(n + m)$ time, in which the $n$ agents are aggregated into $k$ types and the $m$ houses are grouped into at most $2^k$ categories. 
    After preprocessing, the size of the compressed instance depends only on $k$.
   Finding the unique EFM partition of a bipartite graph can be done in $O(|E|\sqrt{|U|+|V|})$ time~\cite{uniqueAigner-HorevS22}, which in our setting is bounded by $O(k \cdot 2^{1.5k})$.
    Clearly, given the partition $(N_{S}, \hat{N}_{L}, N_{L}\setminus\hat{N}_{L})$, we can test whether an allocation meeting all the conditions in Lemma~\ref{lemma:min-subsidy-conditions}, as demonstrated in Algorithm~\ref{Alg-Partition}.
    In particular, constructing the bipartite graph in Lines~1--6 of Algorithm~\ref{Alg-Partition} takes $O(k \cdot 2^k)$ time. 
    Note that the maximum-cardinality capacitated $b$-matching problem can be reduced to a maximum flow problem in $O(k \cdot 2^k)$ time~\cite{ahuja1993network}.
    We compute a maximum matching in updated $G^*$ using the maximum flow almost-linear-time algorithm of Chen et al.~\cite{chen2025maximum}, which achieves a running time of $|E|^{1+o(1)}$ for a network with $|E|$ edges. 
    In the flow network constructed from $G^*$, the number of edges is $O(k \cdot 2^k)$, giving a running time of $O\big((k\cdot2^k)^{1+o(1)}\big).$
    Moreover, the total number of partitions is at most $2^{k}$ by Observation~\ref{observation:partitions-number}.  
    The overall time complexity is $O(2^{k} \cdot (k \cdot 2^{k})^{(1+o(1))}) = O((k \cdot 2^{2k})^{(1+o(1))})$.
    Therefore, the overall running time is $O\!\left((k \cdot 2^{2k})^{(1 + o(1))}\right)+ O(n + m)$. 
\end{proof}

\section{Missing Proofs in Section~\ref{section:two-types}}
\label{appendix-section:two-types}

\begin{proof}[\textbf{Proof of Lemma~\ref{lemma:threshold-d}}]
    We first prove the “only if” direction by contradiction. 
    Assume that an agent $i$ of type-$x$ is assigned house $h_i$ and an agent $j$ of type-$y$ is assigned house $h_j$ in EFable allocation $\mathbf{A}$, with $d(h_i) < d(h_j)$.  
    Then the weight of the cycle $(i,j,i)$ is 
    \begin{align*}
        w(i,j)+w(j,i)&=\left(v_i(h_j)-v_i(h_i) \right)+\left( v_j(h_i)-v_j(h_j)\right) \\
        &= \left(v_i(h_j)-v_j(h_j)\right)-\left(v_i(h_i)-v_j(h_i)\right) \\
        &=d(h_j)-d(h_i)>0,
    \end{align*}
     which contradicts that the allocation $\mathbf{A}$ is EFable.

Next we prove the “if” direction.
Assume the allocation $\mathbf{A}=(A_x,A_y)$ satisfies the threshold property~(\ref{eq:threshold}).
We show that this implies that $\mathbf{A}$ is EFable by proving that it satisfies the equivalent condition of maximizing utilitarian welfare across all reassignments of its houses.

The utilitarian welfare of the allocation $\mathbf{A}$ is given by:
\begin{equation*}
C(\mathbf{A}) = \sum_{h \in A_x} v_x(h) + \sum_{h \in A_y} v_y(h).
\end{equation*}
Since all agents of the same type are identical, the only valid permutations are those that involve swapping houses between an agent of type-$x$ and an agent of type-$y$. 
Therefore, it is sufficient to show that no single swap of a house from an agent of type-$x$ with a house from an agent of type-$y$ can increase the total utilitarian welfare.

Let \(h' \in A_x\) and \(h^* \in A_y\) be arbitrary houses.   
Consider a new allocation $\mathbf{A}'$ formed by swapping these two houses. 
The utilitarian welfare of $\mathbf{A}'$ is 
\begin{align*}
C(\mathbf{A}')&= \left( \sum_{h \in A_x, h \neq h'} v_x(h) + v_x(h^*) \right) \\
 &+  \left( \sum_{h \in A_y, h \neq h^*} v_y(h) + v_y(h') \right).
\end{align*}
The difference in utilitarian welfare is:
\begin{align*}
    C(\mathbf{A}) - C(\mathbf{A}') &= \left( v_x(h') + v_y(h^*) \right) - \left( v_x(h^*) + v_y(h') \right) \\
    &= \left( v_x(h') - v_y(h') \right) - \left(v_x(h^*) - v_y(h^*)\right) \\
    &= d(h') - d(h^*) \ge 0,
\end{align*}
where the last inequality follows from $d(h') \ge \theta \ge d(h^*)$.
This means that the allocation $\mathbf{A}$ maximizes the utilitarian welfare across all reassignments of its houses. 
Thus, the allocation $\mathbf{A}$ is EFable by Theorem~\ref{theorem:SAGT2019}.
\end{proof}

\begin{proof}[\textbf{Proof of Lemma~\ref{lemma:compute-subsidy}}]
    Given an EFable allocation $\mathbf{A}$, Lemma~\ref{lemma:threshold-d} implies that it must satisfy  $\theta_x \geq \theta_y$.
    The corresponding minimum total subsidy can be characterized in terms of $M_x$ and $M_y$.
    Assume that 
    \begin{equation*}
     i^{*} \! =\arg\max_{i \in N_x}v_x(A_i) \  \textnormal{ and } \  j^{*} \! =\arg\max_{j \in N_y}v_y(A_j).   
    \end{equation*}
    If $\ell(i^*) > 0$, then we have $\ell(j^*) \leq 0$; otherwise, the allocation is not EFable.
    Let $p_j=M_y-v_y(A_j)$ for all $j \in N_y$ and let $p_{i^*}=\ell(i^*)$ and $p_{i}=\ell(i^*) +(M_x-v_x(A_i))$ for all $i \in N_x$ resulting in a subsidy vector $\mathbf{P}$.
    Next we prove that the resulting allocation is EF.
    Now let $u_x$ and $u_y$ denote the final utilities for type-$x$ and type-$y$ agents, respectively, after subsidies are applied.
    For agents of type-$x$, all agents receive 
     \begin{equation*}
    u_x=v_i(A_i)+p_i=M_x+\ell(i^*).
     \end{equation*}
    For agents in type-$y$, all agents receive $u_y=M_y$.
   Then they do not envy any other agent of the same type.
   
    Since $\ell(i^*) > 0$ and the allocation $\mathbf{A}$ is EFable, there does not exist any positive cycle by condition (3) of Theorem~\ref{theorem:SAGT2019}.  
    By Theorem~\ref{theorem:min-subsidy-ell(i)}, setting $p_{i^*} = \ell(i^*)$
    eliminates the envy of type-$x$ agents towards type-$y$ agents.  
    At the same time, type-$y$ agents do not envy type-$x$ agents, which follows from the fact that no positive cycle exists in any EFable allocation.

Theorem~\ref{theorem:min-subsidy-ell(i)} implies that setting  $p_k^* = \ell(k), \  \forall k \in N$ yields a subsidy vector $\mathbf{P^*}$ that achieves the minimum total subsidy.
Since $\ell(i^*) > 0$, $\forall i \in N_x \setminus \{i^*\}$  we have 
\begin{equation*}
\ell(i) = \ell(i^*) + \big(M_x - v_x(A_i)\big) 
\end{equation*}
and  $\forall j \in N_y$, we have 
\begin{equation*}
\ell(j) = M_y - v_y(A_j), 
 \end{equation*}
which coincides with the subsidy vector constructed above.  
Since agent $i^*$ does not envy any agent of type-$x$, we have

\begin{align*}
\ell(i^*) &= \ell(i^*,j^*) = \max_{j \in N_y} \left( w(i^*, j) + w(j, j^*)\right) \\
          &=\max_{j \in N_y} \left( v_{x}(A_{j})-M_x+M_y-v_{y}(A_{j})\right) \\
          &=\max_{j \in N_y} \left( M_y-M_x+d(A_{j}) \right) \\
          &= M_y-M_x + \max_{j \in N_y}d(A_{j}) \\
          &= M_y-M_x + \theta_{y}.
\end{align*}
Then the total subsidy of $\mathbf{A}$ is 
\begin{align*}
    P &=\sum_{i \in N_x} \left( \ell(i^*) +\left( M_x-v_x(A_i)\right) \right)+\sum_{j \in N_y}\left( M_y-v_y(A_j)\right) \\
    &= n_x \cdot \left( \ell(i^*)+ M_x\right) + n_y \cdot M_y - \sum_{k \in N} v_k(A_k) \\
    &= n \cdot M_y +n_x \cdot \theta_{y}- C(\mathbf{A}).
\end{align*}
For the case of $\ell(j^*) > 0$, the argument is symmetric, since the roles of type-$x$ and type-$y$ are arbitrary.
 If $\ell(j^*) > 0$, then we have 
\begin{align*}
    \ell(j^*) &=\ell(j^*,i^*)= \max_{i \in N_x} \left( w(j^*, i) + w(i, i^*)\right) \\
    &= \max_{i \in N_x} \left(  v_{y}(A_{i})-M_y+M_x-v_{x}(A_{i})  \right) \\
    &= M_x-M_y+ \max_{i \in N_x} \left( -d(A_{i}) \right) \\
    &= M_x-M_y-\min_{i \in N_x}d(A_{i})\\
    &= M_x-M_y-\theta_{x}.
\end{align*}
Then the total subsidy of $\mathbf{A}$ is 
\begin{align*}
    P = n \cdot M_x-n_y \cdot \theta_{x}-C(\mathbf{A}).
\end{align*}    
Thus the total subsidy can be summarized as 
 \begin{align}
  \label{eq:total-subsidy}
   \scalebox{1.0}{$
     \begin{aligned} 
     P=
    \begin{cases}
        n \cdot M_y +n_x \cdot \theta_{y}- C(\mathbf{A}) & \text{if }  \ell(i^*) > 0; \\
        n \cdot M_x-n_y \cdot \theta_{x}-C(\mathbf{A}) & \text{if } \ell(j^*) > 0;\\
         n_x \cdot M_x+n_y \cdot M_y -C(\mathbf{A}) & \text{otherwise.}    
    \end{cases}
   \end{aligned}
$}
   \end{align}
where $\theta_{x}$ can serve as the threshold $\theta$ in inequality~\eqref{eq:threshold}, with $\theta_x \geq \theta_y$.
Note that the conditions $\ell(i^*) > 0$ and $\ell(j^*) > 0$ cannot hold simultaneously; otherwise, the allocation $\mathbf{A}$ would not be EFable.
Thus, the minimum subsidy corresponding to $\mathbf{A}$ can be computed using $M_x$, $M_y$, and a pair of thresholds $\theta_x, \theta_y \in \mathcal{D}$.
\end{proof}

\begin{proof}[\textbf{Proof of Theorem~\ref{theorem:type-min-subsidy-EF}}]

We first prove the correctness of our algorithm and then show that it runs in polynomial time. 
Let allocation $\mathbf{A}$ be the final allocation returned by Algorithm~\ref{Alg-type-min-subsidy}. 
This allocation must have been computed during a specific iteration of the algorithm's main loop, corresponding to a particular parameter quadruple $(\theta_x,\theta_y, M_x, M_y)$.
It follows that the allocation $\mathbf{A}$ satisfies the threshold property, and therefore it is EFable.
It can be argued that when the parameters $(\theta_x,\theta_y, M_x, M_y)$ coincide with those of an EFable allocation achieving the minimum total subsidy, the algorithm attains the overall minimum total subsidy by computing the maximum-weight perfect matching, which identifies the allocation that maximizes utilitarian welfare, as established in Lemma~\ref{lemma:compute-subsidy}.

Next we prove that Algorithm~\ref{Alg-type-min-subsidy} terminates in polynomial time.
The initial phase of the algorithm involves creating three discrete sets that define the parameter space for the search.
Computing the valuation differences $d(h)$ for all $h \in H$ and sorting them takes $O(m \log m)$ time.
Thus, the total time complexity of the precomputation phase is $O(m \log m)$.
The algorithm employs a quadruple nested loop that iterates through every combination of parameters from the precomputed sets.
Therefore, the total number of iterations performed by the main loop is bounded by $O(m^4)$.
For each parameter quadruple $(\theta_x,\theta_y, M_x, M_y)$, the algorithm takes $O(m)$ time in lines 6 - 7.
Constructing a bipartite graph with $(n+m)$ vertices and finding a maximum weight perfect matching takes $O((n+m)^3)$ by Hungarian algorithm~\cite{kuhn1955hungarian}.
After obtaining an allocation $\mathbf{A}$, calculating the utilities and the total subsidy takes $O(n)$ time.
The total time complexity of the algorithm is the product of the number of main loop iterations $O(m^4)$ and the complexity of a single iteration $O((n+m)^3)$.
The overall time complexity of Algorithm~\ref{Alg-type-min-subsidy} is $O(m^7)$. 
\end{proof}

\end{document}